\PassOptionsToPackage{unicode}{hyperref}
\PassOptionsToPackage{hyphens}{url}
\PassOptionsToPackage{dvipsnames,svgnames,x11names}{xcolor}
\documentclass[12pt]{article}

\usepackage{amsmath,amssymb,amsfonts}
\usepackage{amsthm}
\usepackage{algorithm}
\usepackage{algpseudocode}
\usepackage{array}
\usepackage{appendix}
\usepackage{booktabs}
\usepackage{caption}
\usepackage{etoolbox}
\usepackage{float}
\usepackage{graphicx}
\usepackage{hyperref}
\usepackage{iftex}
\usepackage{longtable}
\usepackage{natbib}
\usepackage{rotating}
\usepackage{subcaption}
\usepackage{xcolor}
\usepackage{xurl}

\ifPDFTeX
  \usepackage[T1]{fontenc}
  \usepackage[utf8]{inputenc}
  \usepackage{textcomp}
\else
  \usepackage{unicode-math}
  \defaultfontfeatures{Scale=MatchLowercase}
  \defaultfontfeatures[\rmfamily]{Ligatures=TeX,Scale=1}
\fi
\usepackage{lmodern}
\IfFileExists{microtype.sty}{\usepackage[]{microtype}\UseMicrotypeSet[protrusion]{basicmath}}{}
\makeatletter
\patchcmd\longtable{\par}{\if@noskipsec\mbox{}\fi\par}{}{}
\setkeys{Gin}{width=\linewidth,height=\textheight,keepaspectratio}
\def\fps@figure{htbp}
\floatstyle{ruled}
\@ifundefined{c@chapter}{\newfloat{codelisting}{h}{lop}}{\newfloat{codelisting}{h}{lop}[chapter]}
\makeatother
\floatname{codelisting}{Listing}
\hypersetup{
  pdftitle={Exact grid-free confidence-region computation from dependent p-value functions under arbitrary dependence},
  pdfauthor={},
  pdfkeywords={dependent p-value functions, exact grid-free inference, confidence-region computation, voting algorithms, arbitrary dependence, conformal ensembling},
  colorlinks=true,
  linkcolor=blue,
  citecolor=blue,
  urlcolor=blue
}
\theoremstyle{plain}
\newtheorem{theorem}{Theorem}

\newtheorem{proposition}[theorem]{Proposition}

\theoremstyle{remark}

\theoremstyle{definition}
\newtheorem{definition}{Definition}

\newcommand{\E}{\mathbb{E}}
\newcommand{\1}{\mathbf{1}}
\newcommand{\ind}[1]{\1\{#1\}}
\newcommand{\piOrd}[1]{\pi_{(#1)}}
\newcommand{\Vote}{\operatorname{Vote}}
\newcolumntype{L}[1]{>{\raggedright\arraybackslash}p{#1}}
\providecommand{\anon}{1}

\begin{document}
\def\spacingset#1{\renewcommand{\baselinestretch}{#1}\small\normalsize}
\spacingset{1}

\if1\anon
{
  \title{\bf Exact grid-free confidence-region computation from dependent $p$-value functions under arbitrary dependence}
  \author{Yaohui Lin\thanks{}\\
    School of Mathematical Sciences, South China Normal University\\
    \texttt{linyuan@m.scnu.edu.cn}}
  \date{}
  \maketitle
  \vspace{-30pt}
}
\fi

\if0\anon
{
  \bigskip
  \bigskip
  \bigskip
  \begin{center}
    {\LARGE\bf Exact grid-free confidence-region computation from dependent $p$-value functions under arbitrary dependence}
  \end{center}
  \medskip
}
\fi

\begin{abstract}
Repeated sample splitting, cross-fitting, conformal ensembling, and related randomized workflows often produce multiple dependent valid $p$-value functions for the same target. Existing $p$-merging theory guarantees pointwise validity under arbitrary dependence, but turning the merged output into a confidence region typically requires repeated evaluation on a grid over the parameter space. This inversion step can be computationally costly, approximation-dependent, and increasingly difficult to scale.

We study when confidence regions can instead be computed exactly from split-wise regions without gridding the parameter space. Our main structural result shows that mergers induced by step calibrators form a broad exactly executable class at the region level, and a partial converse within the calibrator-induced family indicates that exact executability is closely tied to step structure. Within this class, we develop exact voting algorithms, including an adaptive multi-quantile contour aggregator that avoids pre-specifying a single order-statistic threshold while preserving finite-sample validity under arbitrary dependence.

Simulation studies on repeated-split regression and conformal prediction, together with real-data regression examples, show that the proposed methods provide stable, robustness-oriented exact inference with substantial runtime gains over grid-inversion comparators. A grid-resolution benchmark shows that fixed-$k$ voting and adaptive multi-quantile voting are essentially insensitive to inversion-grid refinement, while a multidimensional single-threshold stress test illustrates the dimensional blow-up faced by grid-inversion baselines in a simple box-geometry setting. Taken together, the results show that arbitrary-dependence contour merging can be turned into an exactly executable region-computation framework rather than merely a pointwise validity device.
\end{abstract}

\noindent{\it Keywords:} dependent $p$-value functions, exact grid-free inference, confidence-region computation, voting algorithms, arbitrary dependence, conformal ensembling

\vfill
\newpage
\spacingset{1.8}

\section{Introduction}\label{sec:intro}

Repeated randomization is now routine in statistical inference. Sample splitting, cross-fitting, conformal ensembling, and related workflows (see, e.g., \cite{MeinshausenMeierBuhlmann2009,WassermanRoeder2009,ChernozhukovEtAl2018,Vovk2015,BarberCandesRamdasTibshirani2021}) are often used to stabilize procedures or recover validity after data-adaptive model fitting. A common by-product is that the same inferential target is accompanied not by a single valid $p$-value function, but by a collection of dependent valid $p$-value functions generated from different random splits or randomization draws. The practical question is how to aggregate these outputs into one coherent confidence procedure.

At the pointwise level, this question is largely understood. As shown in the literature on $p$-value merging under arbitrary dependence (see \cite{VovkWang2020,VovkWangWang2022}), a scalar merging rule can be applied pointwise in the parameter to obtain a valid aggregated $p$-value function. The remaining difficulty is computational rather than purely probabilistic: pointwise validity does not by itself provide a practically usable confidence region. In continuous or high-dimensional parameter spaces, the merged function typically has to be evaluated on a dense grid and then inverted numerically, a step that can dominate runtime, introduce approximation error, and scale poorly with dimension.

\subsection{From pointwise validity to executable regions}
Pointwise lifting is conceptually simple but can be computationally impractical when $\Theta$ is continuous or high-dimensional: evaluating $\pi_{\mathrm{agg}}(\theta)$ on a fine grid can dominate the overall cost. Related conformal-prediction work has highlighted similar efficiency bottlenecks in covariate-shift, full-conformal, and non-exchangeable settings, motivating dedicated acceleration methods \cite{PrinsterSariaLiu2023,Li2024FastExactConformal,PlassierEtAl2024}. The issue also arises in ordinary statistical workflows. A practitioner may repeat sample splitting to reduce split-lottery effects, refit a model across random subsets, or query prediction and parameter regions for many target values. In these settings the split-wise regions are often already available as intervals, boxes, ellipsoids, or other sets with efficient intersection and union operations, while the merged pointwise contour still has to be inverted numerically.

A typical example arises in repeated-split conformal prediction for a continuous response. In many implementations, each random training-calibration split naturally returns a prediction interval at a requested level, and the analyst may query the resulting procedure across several test covariates, nominal levels, or random seeds to mitigate split-lottery effects. A generic pointwise \(p\)-merging implementation must then recover or evaluate the split-wise \(p\)-value functions over candidate response values before the merged contour can be inverted, with this numerical inversion repeated across the queried targets. A vote-implementable merger instead uses exactly the region-level objects already produced by the workflow: split-wise intervals at a small number of explicitly determined adjusted levels. Changing the nominal level only changes these adjusted levels, so the aggregated prediction region is obtained by exact interval operations rather than by re-evaluating a merged contour on a response grid.

This suggests a different computational strategy: instead of evaluating a merged contour point by point, can one compute the aggregated region directly from finitely many split-wise regions? For contour-based inference, including inferential-model plausibility analysis, this is not a cosmetic implementation issue. One often queries the full nested family $\{C_\alpha\}_{\alpha\in(0,1)}$, repeats the construction across many targets, or changes the nominal level during sensitivity analysis. In such cases, the inversion step becomes an avoidable computational and numerical layer. Compared with this pointwise route, direct region-level computation preserves the finite-sample validity of the merger while avoiding discretization error from the inversion grid. It also differs from asymptotic shortcuts: the construction is exact at the region level once the split-wise regions are available. This motivates the following question:

\begin{quote}
\emph{Which contour aggregation rules admit an exact region-level implementation, i.e., allow computing $C_{\mathrm{agg}}(\alpha)=\{\theta: \pi_{\mathrm{agg}}(\theta)>\alpha\}$ directly from the split-wise regions $\{C_r(\cdot)\}$ without gridding $\Theta$?}
\end{quote}

Our answer is built around a single organizing idea: exact region-level executability is a structural property of the merger, not a post-hoc implementation detail. Once that property is made explicit, a broad class of contour mergers becomes simultaneously dependence-robust, interpretable, and exactly implementable. Importantly, this exactness is not merely a computational convenience: within the calibrator-induced family, our partial converse shows that, under a mild nondegeneracy condition, fixed-level pointwise-universal executability is closely tied to step structure.

The nearby literature has three especially relevant adjacent strands. Pointwise $p$-merging gives dependence-robust contour values (see \cite{VovkWang2020,VovkWangWang2022}). Set-level majority-vote aggregation gives set merging under arbitrary dependence (see \cite{GasparinRamdas2024}). Recent SAT/data-light uncertainty-set merging work studies synthetic tests, test inversion, and admissibility for deterministic set merging (see \cite{QinHeGangXia2024}). Our object is different: we study \emph{contour-level aggregation} and ask when an entire merged \emph{nested family} can be executed exactly at the region level without gridding $\Theta$. The distinction is substantive rather than terminological. A merged set at a single nominal level does not automatically provide a coherent, exactly queryable contour family across all levels, whereas that family is the basic output of contour-based inference.

\subsection{Contributions}
This paper is organized around one claim: under arbitrary dependence, the practically decisive question is not only whether a contour merger is pointwise valid, but whether the induced nested region family is exactly executable. We make four contributions. (1) We formalize \emph{finite-layer vote-implementability} as the contour-level property that bridges merged contour values and exact region inversion. For mergers induced by step calibrators, Theorem~\ref{thm:step-cal} yields a broad exactly executable class under arbitrary dependence. (2) A fixed-level partial converse, stated as Proposition~\ref{prop:step-necessity} in Appendix~\ref{app:partial-converse}, shows that, under a mild nondegeneracy condition, pointwise-universal uniform vote-implementability inside the calibrator-induced class essentially forces step structure. This shows that the executable class is not an arbitrary algorithmic convenience but a structurally distinguished subclass of contour mergers. (3) We construct an adaptive multi-quantile (generalized Hommel-type) contour aggregator that removes the need to commit to a single order-statistic threshold while preserving exact grid-free computation through a transparent multi-threshold voting algorithm (Proposition~\ref{prop:ghom-vote} and Algorithm~\ref{alg:multivote}). (4) We show empirically that these methods occupy a useful point on the efficiency--computation frontier: they are dependence-robust, split-stable, tuning-light by default, and substantially faster than grid-inversion comparators in our benchmarks, while trading off region length against robustness and threshold-free operation relative to oracle-tuned fixed-$k$ choices. In this sense, the paper complements the existing arbitrary-dependence validity theory with a contour-level executability theory and an associated algorithmic framework.

Viewed relative to adjacent work, pointwise $p$-merging tells us how to obtain a valid contour value at each $\theta$, while set-level voting and SAT/data-light methods construct a valid set at a chosen confidence level. Our contribution sits between these views: we ask when a contour merger is valid \emph{and} admits exact region-level execution for the whole nested family. This is also why SAT/data-light set-merging methods are not direct numerical baselines for our contour-family target: their inferential object is a one-shot set at a selected level, whereas our target is an executable merged contour whose super-level sets can be queried across levels without repeated inversion.

\section[Setup: strong validity, contours, and pointwise p-merging]{Setup: strong validity, contours, and pointwise $p$-merging}\label{sec:setup}

Let $Z$ denote observed data with distribution $P_\theta$ indexed by $\theta\in\Theta$. A \emph{plausibility function} is any mapping $\pi_Z:\Theta\to[0,1]$. Following the inferential-model convention, a (normalized) \emph{plausibility contour} additionally satisfies
\begin{equation}\label{eq:normalize}
  \sup_{\theta\in\Theta}\pi_Z(\theta)=1\qquad\text{for every realized dataset } Z=z.
\end{equation}
(In possibilistic terminology, a normalized contour is a possibility distribution and induces a possibility measure $\Pi_Z(A)=\sup_{\theta\in A}\pi_Z(\theta)$; see \cite{Zadeh1978,DuboisPrade1988}. )
In the remainder, we use \emph{$p$-value function} as the primary term and use \emph{plausibility contour} synonymously.

We allow $R$ such functions/contours $\{\pi_Z^{(r)}\}_{r=1}^R$ for the same $\theta$, obtained, for example, by repeated sample splitting, repeated randomization, or multiple algorithmic restarts. To lighten notation, we suppress the explicit dependence on $Z$ and write $\pi^{(r)}(\theta)$.

\begin{definition}[Strong validity]\label{def:sv}
A plausibility function $\pi_Z(\theta)\in[0,1]$ is \emph{strongly valid} if for all $\theta_0\in\Theta$ and all $u\in[0,1]$,
\begin{equation}\label{eq:sv}
  P_{\theta_0}\{\pi_Z(\theta_0)\le u\}\le u.
\end{equation}
\end{definition}

Strong validity ensures that the super-level set $C_\alpha(Z)=\{\theta: \pi_Z(\theta)>\alpha\}$ is a $(1-\alpha)$ confidence region. We use two standard closure facts throughout: pointwise $p$-merging preserves strong validity, and optional post-hoc normalization can preserve strong validity when a normalized contour is desired. Formal statements and proofs are collected in Appendix~\ref{app:supporting-validity}, so the main text can focus on the region-level computation problem.

\section{Grid-free inversion and finite-layer voting}\label{sec:votingframework}

Pointwise contour aggregation is statistically valid but computationally expensive: one first evaluates $\pi_{\mathrm{agg}}(\theta)$ over $\Theta$, then inverts by thresholding. In continuous or high-dimensional parameter spaces, this inversion step is often the main bottleneck. By contrast, many split-and-refit pipelines already provide split-wise regions $C_r(t)=\{\theta:\pi^{(r)}(\theta)>t\}$ at selected levels, and set operations on these regions are typically cheap. For contour-based procedures, this is a methodological issue, not a cosmetic implementation detail: a contour is useful only if its super-level sets can be queried over many $\alpha$ values without repeated dense gridding. The key question is therefore whether membership in the aggregated region can be decided from finitely many thresholded split-wise regions, without evaluating $\pi_{\mathrm{agg}}$ on a parameter grid. The next definitions formalize this implementability property. For a merged contour $\pi_{\mathrm{agg}}$, define the induced confidence region at level $\alpha\in(0,1)$ by
\[C_{\mathrm{agg}}(\alpha)=\{\theta: \pi_{\mathrm{agg}}(\theta)>\alpha\}.\]

\begin{definition}[Vote operator]\label{def:vote}
Given sets $A_1,\dots,A_R\subseteq\Theta$ and an integer $m\in\{1,\dots,R\}$, define
\[\Vote_m(A_1,\dots,A_R)=\Bigl\{\theta:\sum_{r=1}^R \ind{\theta\in A_r}\ge m\Bigr\}.
\]
\end{definition}

\paragraph*{A one-dimensional example.}
Before giving the formal definition, consider an explicit one-step-calibrator calculation with
\(R=2\), \(\Theta=\mathbb{R}\), and target level \(\alpha=0.05\). Suppose that the two
split-wise regions at the adjusted level \(t=0.025\) are
\[
  C_1(0.025)=[-1.3,0.6],\qquad C_2(0.025)=[-0.4,1.4].
\]
Take the one-step calibrator
\[
  f(u)=40\cdot\mathbf{1}_{\{u\le 0.025\}},
\]
which satisfies \(\int_0^1 f(u)\,du=40\times 0.025=1\). Writing
\(N_\theta(t)=\sum_{r=1}^2\ind{\pi^{(r)}(\theta)\le t}\), the calibrator-induced merger obeys
\[
  \pi_f(\theta)>0.05
  \quad\Longleftrightarrow\quad
  40N_\theta(0.025)<2/0.05=40
  \quad\Longleftrightarrow\quad
  N_\theta(0.025)=0.
\]
Since \(N_\theta(0.025)=0\) is equivalent to membership in both split-wise regions, the
merged region is obtained directly as
\[
  C_f(0.05)=\Vote_2(C_1(0.025),C_2(0.025))
  =C_1(0.025)\cap C_2(0.025)=[-0.4,0.6].
\]
No parameter grid is evaluated. If one instead wanted the at-least-one-split vote, the
one-step calibrator \(f(u)=20\cdot\mathbf{1}_{\{u\le 0.05\}}\) at the same \(\alpha=0.05\) gives
\(20N_\theta(0.05)<40\), i.e. \(N_\theta(0.05)\le 1\), which is equivalent to
\(\Vote_1(C_1(0.05),C_2(0.05))\). Figure~\ref{fig:one-dim-vote-example} displays the
two-split intersection case. The definitions below generalize this calculation from one
threshold and two intervals to finitely many thresholds, any number of splits, and more
general set operations.

\begin{figure}[H]
  \centering
  \includegraphics[width=0.72\linewidth]{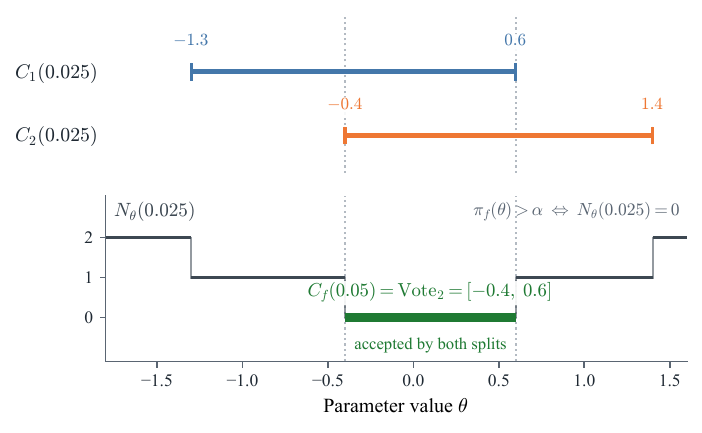}
  \caption{One-dimensional \(R=2\), \(M=1\) voting example. Top: the two split-wise intervals at the adjusted level \(t=0.025\). Bottom: the rejection count \(N_\theta(0.025)\), which changes only at the four interval endpoints; the merged region \(C_f(0.05)=\{\theta: N_\theta(0.025)=0\}=\Vote_2(C_1(0.025),C_2(0.025))\) is read off exactly, without evaluating a merged contour on a parameter grid.}
  \label{fig:one-dim-vote-example}
\end{figure}

\begin{definition}[Finite-layer vote-implementability]\label{def:vote-impl}
Let $\pi_{\mathrm{agg}}$ be a merged plausibility function built from $\{\pi^{(r)}\}_{r=1}^R$. We say that $\pi_{\mathrm{agg}}$ is \emph{$M$-layer vote-implementable} if for each $\alpha\in(0,1)$ there exist adjusted levels $\alpha_1'(\alpha),\dots,\alpha_M'(\alpha)$ such that
\[
C_{\mathrm{agg}}(\alpha)\quad\text{can be expressed using finitely many unions/intersections of vote sets}
\]
\[
\quad \Vote_{m}\bigl(C_1(\alpha_j'),\dots,C_R(\alpha_j')\bigr),\qquad j\in\{1,\dots,M\},\ m\in\{1,\dots,R\}.
\]
In particular, computing $C_{\mathrm{agg}}(\alpha)$ requires only the split-wise regions at $M$ adjusted levels and set operations, without evaluating $\pi_{\mathrm{agg}}(\theta)$ on a grid over $\Theta$.
\end{definition}

The definition allows a range of implementations, from the simplest single-threshold vote (Section~\ref{sec:orderstat}) to multi-threshold constructions (Section~\ref{sec:adaptive}). We next give a broad sufficient condition based on step calibrators.

\begin{figure}[t]
  \centering
  \includegraphics[width=0.98\linewidth]{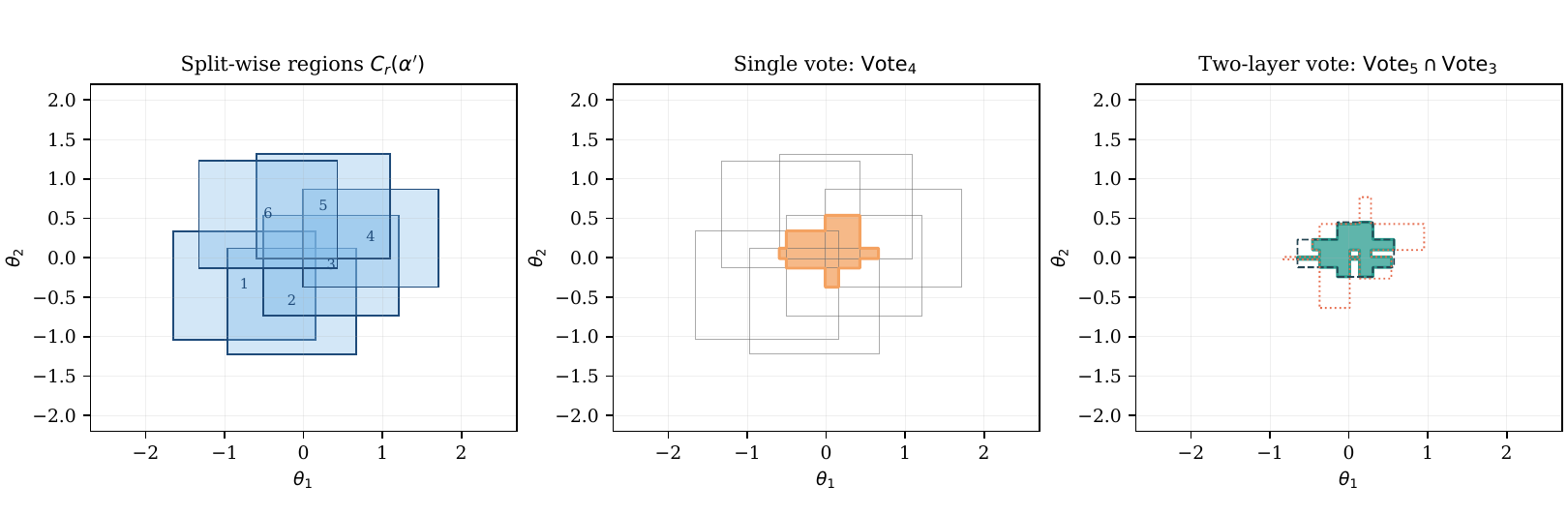}
  \caption{Schematic two-dimensional geometry of vote-based inversion. Left: split-wise regions $C_r(\alpha')$. Middle: a single-threshold vote region. Right: a two-layer vote intersection (MQ-style). The construction illustrates how contour inversion is reduced to set operations on split-wise regions.}
  \label{fig:concept-voting}
\end{figure}

\subsection{Step-calibrator induced merging and finite-layer voting}\label{subsec:stepcal}

The connection between $p$-values and e-values via \emph{calibrators} provides a flexible way to construct merging rules \cite{VovkWang2021}. We specialize it to step calibrators and show that it yields finite-layer voting.

\begin{definition}[Calibrators and step calibrators]\label{def:cal}
A measurable nonincreasing function $f:[0,1]\to[0,\infty)$ is a \emph{calibrator} if
\[\int_0^1 f(u)\,du\le 1.
\]
An \emph{$M$-step calibrator} is a calibrator of the form
\begin{equation}\label{eq:stepcal}
f(u)=\sum_{j=1}^M c_j\,\ind{u\le t_j},\qquad 0<t_1<\cdots<t_M\le 1,\ \ c_j\ge 0.
\end{equation}
The calibrator condition becomes $\sum_{j=1}^M c_j t_j\le 1$.
\end{definition}

Given $R$ $p$-value functions $\pi^{(r)}(\theta)$, define the associated e-value functions $e^{(r)}(\theta)=f(\pi^{(r)}(\theta))$ and the merged contour
\begin{equation}\label{eq:cal-merge}
\pi_f(\theta)=\min\left\{1,\ \frac{1}{\frac{1}{R}\sum_{r=1}^R e^{(r)}(\theta)}\right\}.
\end{equation}
Here and throughout, we use the convention $1/0:=\infty$, so $\pi_f(\theta)=1$ whenever the averaged e-value vanishes.
By Markov's inequality, $\pi_f(\theta_0)$ is super-uniform under $P_{\theta_0}$ whenever each input is strongly valid.

\begin{theorem}[Step calibrators yield finite-layer voting]\label{thm:step-cal}
Let $f$ be an $M$-step calibrator as in \eqref{eq:stepcal} and define $\pi_f$ by \eqref{eq:cal-merge}. Then $\pi_f$ is strongly valid. Moreover, for each $\alpha\in(0,1)$, the confidence region $C_f(\alpha)=\{\theta:\pi_f(\theta)>\alpha\}$ is $M$-layer vote-implementable.
\end{theorem}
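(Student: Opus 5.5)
We prove the two claims separately: validity by Markov's inequality, and executability by reducing membership in $C_f(\alpha)$ to a condition on the rejection-count vector.

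\emph{Validity.} Fix $\theta_0$ and suppose each $\pi^{(r)}$ is strongly valid. Since $f$ is nonincreasing and nonnegative, $\E_{\theta_0}[f(\pi^{(r)}(\theta_0))]\le\int_0^1 f(u)\,du\le 1$. To see this, write $f(u)=\sum_j c_j\ind{u\le t_j}$, so that
\[
\E f(\pi^{(r)})=\sum_j c_j P\{\pi^{(r)}\le t_j\}\le \sum_j c_j t_j\le 1,
\]
using \eqref{eq:sv} directly. Hence the average $\bar e=\frac1R\sum_r e^{(r)}(\theta_0)$ has mean at most $1$, whatever the dependence between splits. For $u\in(0,1)$, $\pi_f(\theta_0)\le u$ iff $\bar e\ge 1/u$. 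Markov's inequality then gives $P\{\pi_f(\theta_0)\le u\}\le u$. The case $u=1$ is trivial, and for $u=0$ the event $\bar e=\infty$ is impossible since $f$ is finite.

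\emph{Executability.} Fix $\alpha\in(0,1)$ and define the counts $N_\theta(t_j)=\sum_r\ind{\pi^{(r)}(\theta)\le t_j}$. Then
\[
\textstyle\sum_r e^{(r)}(\theta)=\sum_j c_j N_\theta(t_j),
\]
so $\theta\in C_f(\alpha)$ iff $\sum_j c_j N_\theta(t_j)<R/\alpha$. The key observation is that this condition depends on $\theta$ only through the integer vector $n(\theta)=(N_\theta(t_1),\dots,N_\theta(t_M))\in\{0,\dots,R\}^M$, which is nondecreasing in $j$. Let
\[
S_\alpha=\Bigl\{n\in\{0,\dots,R\}^M:\ \textstyle\sum_j c_j n_j<R/\alpha\Bigr\},
\]
a finite set. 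Then
\[
C_f(\alpha)=\bigcup_{n\in S_\alpha}\bigcap_{j=1}^M\{\theta: N_\theta(t_j)=n_j\}.
\]

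Each slice can be written with vote sets at the adjusted levels $\alpha'_j=t_j$, which do not depend on $\alpha$. Since $C_r(t_j)=\{\pi^{(r)}>t_j\}$, we have $N_\theta(t_j)=R-\sum_r\ind{\theta\in C_r(t_j)}$, and therefore
\[
\{N_\theta(t_j)\le n\}=\Vote_{R-n}(C_1(t_j),\dots,C_R(t_j)).
\]

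\emph{Main obstacle: boundary and complement cases.} Definition~\ref{def:vote-impl} only permits unions and intersections, not complements, so a two-sided slice $\{N=n\}$ is not directly available. I would avoid it by exploiting monotonicity. Because $c_j\ge0$, $S_\alpha$ is a down-set in the coordinatewise order, so it equals the union of the down-sets generated by its maximal elements:
\[
C_f(\alpha)=\bigcup_{n\in\max S_\alpha}\ \bigcap_{j=1}^M\{N_\theta(t_j)\le n_j\}.
\]
This uses only unions and intersections of $\Vote_{R-n_j}(C_1(t_j),\dots,C_R(t_j))$.

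Two edge cases remain. First, $n_j=R$ gives the constraint $\{N\le R\}=\Theta$, which we simply drop from the intersection. Second, if $S_\alpha$ is empty, which happens exactly when $0\notin S_\alpha$, then $C_f(\alpha)=\emptyset$; but $0\in S_\alpha$ always holds since $0<R/\alpha$. With these checks, the region is $M$-layer vote-implementable.
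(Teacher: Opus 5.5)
Your proof is correct and follows essentially the same route as the paper. Validity uses the same Markov argument via $\E f(U_r)=\sum_j c_j P\{U_r\le t_j\}\le\sum_j c_j t_j\le 1$. Executability uses the same reduction to the count vector $(N_\theta(t_1),\dots,N_\theta(t_M))$, downward closure of the accepted count set, and a union over its maximal elements of intersections of $\Vote_{R-m_j}(C_1(t_j),\dots,C_R(t_j))$, with $m_j=R$ constraints dropped as vacuous. The only differences are cosmetic: you take the down-set in the full cube $\{0,\dots,R\}^M$ rather than among monotone vectors (which changes nothing), and you explicitly handle the $u=0$ case, which the paper leaves implicit.
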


The reason is count-based. For a step calibrator, the contour inequality depends on \(\theta\) only through the finite vector of threshold counts \(N_\theta(t_j)=\sum_{r=1}^R\ind{\pi^{(r)}(\theta)\le t_j}\). Because the step weights are nonnegative, the accepted count vectors form a downward-closed set, and each count constraint is exactly a vote event on the split-wise regions \(C_r(t_j)\). Appendix~\ref{app:stepcal-proof} gives the full proof and the explicit finite construction.

This sufficient condition is structural rather than merely algorithmic. Appendix~\ref{app:partial-converse} gives a fixed-level partial converse (Proposition~\ref{prop:step-necessity}) showing that, within the calibrator-induced class and under a mild nondegeneracy condition, pointwise-universal finite-layer vote-implementability essentially forces step structure. Thus finite-layer voting is not a generic wrapper for arbitrary smooth calibrators. Continuous calibrators can be statistically attractive, but they usually sit on the other side of the tractability--efficiency frontier because exact inversion reverts to pointwise grid evaluation.

\section{Single-threshold voting: scaled order-statistic merging}\label{sec:orderstat}

We now revisit the classical scaled order-statistic merging rule through the lens of executable contour inference. The underlying validity and sharpness statements are classical in the $p$-merging literature \cite{Ruger1978,VovkWangWang2022}; what matters here is that this familiar rule emerges as the one-threshold prototype of the broader finite-layer voting paradigm.

For each fixed $\theta\in\Theta$, let
\[\piOrd{1}(\theta)\le \piOrd{2}(\theta)\le\cdots\le \piOrd{R}(\theta)\]
denote the order statistics of $\{\pi^{(1)}(\theta),\dots,\pi^{(R)}(\theta)\}$.

\begin{definition}[$k$th order-statistic aggregation]\label{def:agg}
Fix integers $1\le k\le R$. Define
\begin{equation}\label{eq:pi-agg}
  \pi_{k,R}(\theta)=\min\left\{1,\ \frac{R}{k}\,\piOrd{k}(\theta)\right\}.
\end{equation}
\end{definition}

The rule is strongly valid under arbitrary dependence, and the scaling factor \(R/k\) is sharp within the scaled order-statistic family. These are classical facts in the $p$-merging literature; for completeness, Appendix~\ref{app:orderstat-details} records the formal statements and proofs. If normalized contours are required, the post-hoc normalization result in Proposition~\ref{prop:normalize} can be applied after aggregation.

\begin{proposition}[Exact single-threshold voting representation]\label{prop:voting}
Let $\alpha\in(0,1)$ and set $\alpha'=(k/R)\alpha$. Define split-wise regions $C_r(\alpha')=\{\theta: \pi^{(r)}(\theta)>\alpha'\}$ and the aggregated region $C_{k,R}(\alpha)=\{\theta: \pi_{k,R}(\theta)>\alpha\}$. Then
\[C_{k,R}(\alpha)=\Vote_{R-k+1}\bigl(C_1(\alpha'),\dots,C_R(\alpha')\bigr).
\]
\end{proposition}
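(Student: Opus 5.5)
The plan is to prove the set identity pointwise. Fix $\theta\in\Theta$ and show that $\theta\in C_{k,R}(\alpha)$ if and only if $\theta$ lies in at least $R-k+1$ of the split-wise regions $C_r(\alpha')$. Everything reduces to a statement about the order statistics of the finite vector $(\pi^{(1)}(\theta),\dots,\pi^{(R)}(\theta))$. No property of $\Theta$ or of the dependence structure is needed, since the claim is purely deterministic.

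\emph{Step 1 (removing the truncation).} Since $\alpha<1$, the inequality $\min\{1,(R/k)\piOrd{k}(\theta)\}>\alpha$ holds exactly when $(R/k)\piOrd{k}(\theta)>\alpha$. If the minimum equals $1$, both sides hold. Otherwise the minimum is $(R/k)\piOrd{k}(\theta)$ itself. In either case the condition is equivalent to $\piOrd{k}(\theta)>(k/R)\alpha=\alpha'$.

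\emph{Step 2 (order statistic as a count).} Write $N_\theta(\alpha')=\sum_{r}\ind{\pi^{(r)}(\theta)\le\alpha'}$. I will use the elementary fact that $\piOrd{k}(\theta)>\alpha'$ if and only if $N_\theta(\alpha')\le k-1$, that is, at most $k-1$ of the values are $\le\alpha'$.
\begin{itemize}
\item[($\Leftarrow$)] If $\piOrd{k}(\theta)\le\alpha'$, then $\piOrd{1}(\theta),\dots,\piOrd{k}(\theta)$ are all $\le\alpha'$, giving at least $k$ such values.
\item[($\Rightarrow$)] If at least $k$ values are $\le\alpha'$, then the $k$th smallest value is $\le\alpha'$.
\end{itemize}
Ties cause no trouble because the order statistics are a sorted listing of the multiset.

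\emph{Step 3 (translation to votes).} The number of indices $r$ with $\theta\in C_r(\alpha')$, i.e.\ with $\pi^{(r)}(\theta)>\alpha'$, equals $R-N_\theta(\alpha')$. So $N_\theta(\alpha')\le k-1$ is equivalent to $\sum_r\ind{\theta\in C_r(\alpha')}\ge R-k+1$. By Definition~\ref{def:vote}, this is exactly $\theta\in\Vote_{R-k+1}(C_1(\alpha'),\dots,C_R(\alpha'))$. Chaining Steps 1--3 gives the stated identity.

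There is no real obstacle here. The only points needing care are the strict versus non-strict inequalities, which must match the strict super-level sets in both $C_{k,R}$ and $C_r$, and the handling of the $\min\{1,\cdot\}$ truncation, which is where $\alpha<1$ is used. As a consistency check, the identity can also be read as the $M=1$ case of Theorem~\ref{thm:step-cal} via the step calibrator $f(u)=(R/k)\alpha^{-1}\cdot\ldots$. However, I would give the direct order-statistic argument above, since $\pi_{k,R}$ is not literally of the form \eqref{eq:cal-merge}.
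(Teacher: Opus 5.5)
Your proof is correct and follows the same route as the paper's: reduce $\theta\in C_{k,R}(\alpha)$ to $\piOrd{k}(\theta)>\alpha'$, and then to at least $R-k+1$ of the values $\pi^{(r)}(\theta)$ exceeding $\alpha'$. You additionally spell out the two steps the paper's one-line argument leaves implicit: the truncation at $1$ is inactive because $\alpha<1$, and the order-statistic/count equivalence holds even with ties.
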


Indeed, $\theta\in C_{k,R}(\alpha)$ iff $\piOrd{k}(\theta)>\alpha'$, which is equivalent to at least \(R-k+1\) split-wise contours exceeding \(\alpha'\) at \(\theta\).

\begin{algorithm}[t]
  \caption{Single-threshold region-level voting for $C_{k,R}(\alpha)$}\label{alg:voting}
  \begin{algorithmic}[1]
    \Require Split-wise plausibility functions $\{\pi^{(r)}\}_{r=1}^R$, target level $\alpha\in(0,1)$, parameter $k$
    \State $\alpha' \gets (k/R)\alpha$
    \For{$r=1,\dots,R$}
      \State Compute $C_r(\alpha')=\{\theta: \pi^{(r)}(\theta)>\alpha'\}$
    \EndFor
    \State \Return $C_{k,R}(\alpha)=\Vote_{R-k+1}(C_1(\alpha'),\dots,C_R(\alpha'))$
  \end{algorithmic}
\end{algorithm}

The parameter $k$ controls a stability--conservatism trade-off. The extremes $k=1$ and $k=R$ correspond to scaled intersection and union, while intermediate values yield interpretable fraction-of-splits support requirements. When dependence across splits is strong and positive, worst-case arbitrary-dependence calibration can be conservative; Section~\ref{sec:num} illustrates this phenomenon.

\section{Adaptive multi-quantile voting}\label{sec:adaptive}

Single-threshold voting requires specifying $k$. To reduce sensitivity to this choice, we consider an adaptive multi-quantile rule inspired by Hommel-type combination methods \cite{Hommel1983}. Fix $1\le M\le R$, choose $0<\lambda_1<\cdots<\lambda_M\le 1$, and define $k_m=\lceil \lambda_m R\rceil$.

\begin{definition}[Multi-quantile (generalized Hommel-type) contour aggregator]\label{def:mq}
Define
\begin{equation}\label{eq:mq}
  \pi_{\mathrm{mq}}(\theta)=\min\left\{1,\ h_M\,\min_{m=1,\dots,M}\frac{1}{\lambda_m}\,\piOrd{k_m}(\theta)\right\},
\end{equation}
where $h_M=\sum_{m=1}^M (\lambda_m-\lambda_{m-1})/\lambda_m$ with $\lambda_0=0$. Equation \eqref{eq:mq} is a sparse-grid Hommel-type $p$-merging rule valid under arbitrary dependence \citep{Hommel1983,VovkWangWang2022}; in the notation of \citet{GasparinWangRamdas2025}, it is an instance of the generalized Hommel-type \(F'_{\mathrm{GHom}}\) merger expressed in contour form. The exchangeability refinements developed there are not used for the arbitrary-dependence guarantee in this paper.
\end{definition}

The construction contains fixed-\(k\) voting as the \(M=1\) special case and recovers the classical Hommel combination on the full grid \(\lambda_m=m/R\); details are collected in Appendix~\ref{app:mq-details}. In practice, one need not use a dense quantile grid to benefit from the multi-threshold construction. Our working default choice $\lambda\in\{0.05,0.25,0.5,0.75,1.0\}$ is deliberately sparse: it covers near-intersection, lower-quartile, majority, upper-quartile, and near-union vote layers while keeping $M=5$ small. This gives MQ access to qualitatively different operating regimes without invoking the full $H_R$ construction when one wants an exact grid-free rule without dependence-specific threshold tuning. Appendix Table~\ref{tab:lambda-sens} shows that denser grids move the operating point only mildly in our benchmark.

\begin{proposition}[Multi-threshold voting representation]\label{prop:ghom-vote}
Let $\alpha\in(0,1)$ and define adjusted levels $\alpha_m'=(\alpha\lambda_m)/h_M$ for $m=1,\dots,M$. Let $C_r(\alpha_m')=\{\theta: \pi^{(r)}(\theta)>\alpha_m'\}$. Then the region $C_{\mathrm{mq}}(\alpha)=\{\theta: \pi_{\mathrm{mq}}(\theta)>\alpha\}$ satisfies
\begin{equation}\label{eq:mq-vote}
  C_{\mathrm{mq}}(\alpha)=\bigcap_{m=1}^M \Vote_{R-k_m+1}\bigl(C_1(\alpha_m'),\dots,C_R(\alpha_m')\bigr).
\end{equation}
\end{proposition}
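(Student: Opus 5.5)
The plan is to argue pointwise in $\theta$, exactly as in the one-line argument after Proposition~\ref{prop:voting}: unwind the definition \eqref{eq:mq} into one order-statistic condition per layer $m$, and then translate each condition into a vote event.

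\textbf{Step 1: unwind the $\min\{1,\cdot\}$.} Since $\alpha<1$, $\pi_{\mathrm{mq}}(\theta)>\alpha$ holds if and only if $h_M\min_m \lambda_m^{-1}\piOrd{k_m}(\theta)>\alpha$. Indeed, if the capped value equals $1$ the inequality is automatic and the uncapped quantity is $\ge 1>\alpha$. Conversely, if the uncapped quantity exceeds $\alpha$, then so does its minimum with $1$.

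\textbf{Step 2: split the minimum into layers.} A minimum of finitely many reals exceeds $\alpha$ exactly when each term does. Dividing by the positive constants $h_M$ and $1/\lambda_m$ (note $h_M\ge\lambda_1/\lambda_1=1>0$), the condition becomes
\[
\piOrd{k_m}(\theta)>\alpha\lambda_m/h_M=\alpha_m' \quad \text{for every } m=1,\dots,M.
\]

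\textbf{Step 3: convert each layer to a vote.} For fixed $m$, the inequality $\piOrd{k_m}(\theta)>\alpha_m'$ holds if and only if at most $k_m-1$ of the values $\pi^{(r)}(\theta)$ are $\le\alpha_m'$. Equivalently, at least $R-k_m+1$ indices satisfy $\pi^{(r)}(\theta)>\alpha_m'$, that is, $\theta\in C_r(\alpha_m')$. This is the same order-statistic counting used for Proposition~\ref{prop:voting}, now applied at level $\alpha_m'$ with $k=k_m$. We must check that $k_m\in\{1,\dots,R\}$, so that $R-k_m+1$ is a legitimate vote index. This follows from $0<\lambda_m\le 1$: it gives $\lambda_m R\in(0,R]$, hence $1\le\lceil\lambda_m R\rceil\le R$. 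Taking the intersection over $m$ yields \eqref{eq:mq-vote}.

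\textbf{Main obstacle.} The only delicate point is the order-statistic counting equivalence in the presence of ties among the $\pi^{(r)}(\theta)$. I would verify it directly. If $\piOrd{k}>t$, then all $\piOrd{j}$ with $j\ge k$ exceed $t$, which gives at least $R-k+1$ such values. If instead $\piOrd{k}\le t$, then the indices $j\le k$ give at least $k$ values $\le t$, which leaves at most $R-k$ above $t$. Both directions use only the sorted order, so ties cause no difficulty. Everything else is bookkeeping of strict versus non-strict inequalities, which matches the strict super-level-set convention $C_r(t)=\{\pi^{(r)}>t\}$.
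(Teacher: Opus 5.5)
Your proof is correct and follows the paper's argument essentially step for step. You drop the inactive truncation using $\alpha<1$, split the minimum into the layerwise conditions $\piOrd{k_m}(\theta)>\alpha_m'$, read each condition as a vote event, and intersect over $m$. Your extra checks ($h_M\ge 1$, $1\le k_m\le R$, and the tie-robust order-statistic counting equivalence) are details the paper leaves implicit, and all of them are sound.
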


The representation follows by rewriting the event \(\{\pi_{\mathrm{mq}}(\theta)>\alpha\}\) as the simultaneous inequalities \(\piOrd{k_m}(\theta)>\alpha_m'\), one for each quantile layer. The full algebra is given in Appendix~\ref{app:mq-details}.

\begin{algorithm}[t]
  \caption{Multi-threshold voting for $C_{\mathrm{mq}}(\alpha)$}\label{alg:multivote}
  \begin{algorithmic}[1]
    \Require Split-wise plausibility functions $\{\pi^{(r)}\}_{r=1}^R$, target level $\alpha\in(0,1)$, quantile grid $\{\lambda_m\}_{m=1}^M$
    \State Compute $h_M=\sum_{m=1}^M (\lambda_m-\lambda_{m-1})/\lambda_m$ with $\lambda_0=0$
    \For{$m=1,\dots,M$}
      \State $k_m\gets \lceil \lambda_m R\rceil$, \quad $\alpha_m'\gets (\alpha\lambda_m)/h_M$
      \For{$r=1,\dots,R$}
        \State Compute $C_r(\alpha_m')=\{\theta: \pi^{(r)}(\theta)>\alpha_m'\}$
      \EndFor
      \State $V_m \gets \Vote_{R-k_m+1}(C_1(\alpha_m'),\dots,C_R(\alpha_m'))$
    \EndFor
    \State \Return $C_{\mathrm{mq}}(\alpha)=\bigcap_{m=1}^M V_m$
  \end{algorithmic}
\end{algorithm}

Exact executability and pointwise power are separate design axes. Under arbitrary dependence, some admissible competitors can be more powerful pointwise \cite{VovkWangWang2022}, but they generally do not provide the finite collection of vote constraints in \eqref{eq:mq-vote}. Section~\ref{sec:discussion} returns to this efficiency--executability trade-off.

\section{Numerical experiments}\label{sec:num}

We evaluate the proposed voting algorithms in four complementary ways. First, a stylized split-and-refit regression problem isolates the dependence--conservatism trade-off under controlled Gaussian dependence. Second, a repeated-split conformal experiment tests whether the same logic transfers to predictive regions. Third, real-data examples on the diabetes and California housing datasets examine both repeated-split conformal prediction and split-aggregated coefficient intervals. Fourth, grid-resolution and multidimensional stress tests isolate the computational scaling of grid inversion.

The latter stress tests are computational diagnostics rather than substitutes for broad high-dimensional coverage studies; they are included to make explicit where the grid-free advantage comes from. As additional appendix-only checks, we report a non-Gaussian dependence experiment based on a \(t\)-copula with normal margins and a high-dimensional covariate conformal experiment with \(p=100\) predictors. The empirical objective is to assess whether exact vote-implementability can turn arbitrary-dependence contour merging from a valid pointwise recipe into an operational exact inference workflow, and whether adaptive multi-quantile voting provides a robust no-$k$-tuning default when dependence-specific threshold calibration is not desired. Representative results for the first benchmark appear in Tables~\ref{tab:numexp-main}--\ref{tab:numexp-runtime}. Additional main-text diagnostics and real-data summaries are reported in the remainder of this section, while full scenario grids and secondary comparator tables are deferred to Appendix~\ref{app:num-full}.

\subsection{Split-and-refit regression with repeated splitting}\label{subsec:sar}
We consider a one-dimensional target $\theta_0=0$ with $R=20$ split-wise estimators
\[
  \hat\theta^{(r)}=\theta_0+\mathrm{se}\cdot Z_r,\qquad (Z_1,\dots,Z_R)\sim N(0,\Sigma_\rho),
\]
where $\mathrm{se}=0.22$, $\Sigma_{\rho,ii}=1$, and $\Sigma_{\rho,ij}=\rho$ for $i\neq j$. We study $\rho\in\{0.1,0.3,0.5,0.7,0.9\}$ and $\alpha\in\{0.10,0.05\}$. For each split we use the two-sided Gaussian pivot, so the split-wise region at level $\alpha'$ is
\[
  C_r(\alpha')=\left[\hat\theta^{(r)}-z_{1-\alpha'/2}\,\mathrm{se},\ \hat\theta^{(r)}+z_{1-\alpha'/2}\,\mathrm{se}\right].
\]
Each $(\alpha,\rho)$ scenario uses 20000 replications with deterministic scenario-specific seeds derived from base seed 123.

Core methods are single split, fixed-$k$ order-statistic voting with $k\in\{1,5,10,15\}$ (Algorithm~\ref{alg:voting}), and adaptive multi-quantile voting (Algorithm~\ref{alg:multivote}) with the default five-point $\lambda$-grid from Section~\ref{sec:adaptive}. For $R=20$, this grid induces the internal vote layers $k_m=\lceil \lambda_m R\rceil\in\{1,5,10,15,20\}$. The stand-alone fixed-$k$ comparison reports $k\in\{1,5,10,15\}$, matching four of the five MQ layers; the endpoint $k=20$ is retained inside MQ as a near-union boundary layer but is not used as a separate fixed-$k$ comparator because it is typically very long and mainly serves as a boundary constraint within the multi-layer rule. Appendix comparators are the classical Fisher, Stouffer, and Simes pointwise merged $p$-values \cite{Fisher1932,StoufferEtAl1949,Simes1986}, inverted by a dense one-dimensional grid (801 points on an adaptive support). We report empirical coverage $P_{\theta_0}\{\theta_0\in C(\alpha)\}$, Monte Carlo confidence intervals for coverage, and mean region length $\E_{\theta_0}\{|C(\alpha)|\}$.

To complement the interval summaries, Figure~\ref{fig:numexp-contours} gives the contour-level version of the paper's main story. It shows one representative realization from the same Gaussian design at $\rho=0.5$. The thin gray curves are split-wise contours, and the colored curves are merged contours before the optional post-hoc normalization of Proposition~\ref{prop:normalize}. The distinction between vote-implementable and grid-inversion-based rules is visible: the vote-based mergers produce envelope-type contours whose $\alpha$-super-level sets can be obtained exactly by voting, whereas the continuous-calibrator comparator is smoother but available only through pointwise grid evaluation.

\begin{figure}[t]
  \centering
  \includegraphics[width=0.95\linewidth]{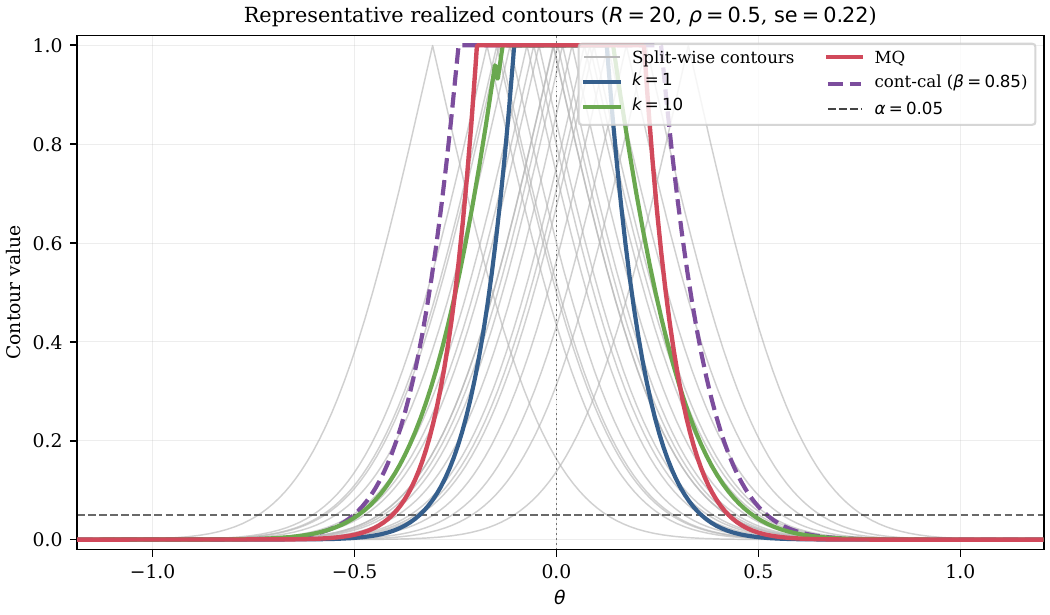}
  \caption{Representative realized split-wise and merged contours in the Gaussian split-and-refit design ($R=20$, $\rho=0.5$, $\mathrm{se}=0.22$; deterministic seed 791644). Thin gray curves are the split-wise contours. Colored curves show the raw merged contours for $k=1$, $k=10$, MQ, and the continuous-calibrator comparator. The dashed horizontal line marks $\alpha=0.05$, so its intersections with the merged curves determine the corresponding confidence regions. The merged curves are shown before the optional post-hoc normalization of Proposition~\ref{prop:normalize}.}
  \label{fig:numexp-contours}
\end{figure}

\begin{table}[t]
  \centering
  \caption{Representative results for $\alpha=0.05$ and selected dependence levels.}
  \label{tab:numexp-main}
  \begin{tabular}{ccccc}
    \toprule
    $\rho$ & Method & Coverage & Mean length & SD length \\
    \midrule
    0.1 & single & 0.9532 & 0.8624 & 0.0000 \\
    0.1 & $k=1$ & 0.9510 & 0.5522 & 0.1514 \\
    0.1 & $k=10$ & 1.0000 & 0.9605 & 0.0245 \\
    0.1 & MQ & 0.9817 & 0.6828 & 0.1477 \\
    \midrule
    0.5 & single & 0.9505 & 0.8624 & 0.0000 \\
    0.5 & $k=1$ & 0.9625 & 0.7500 & 0.1134 \\
    0.5 & $k=10$ & 0.9976 & 0.9669 & 0.0185 \\
    0.5 & MQ & 0.9856 & 0.8742 & 0.1055 \\
    \midrule
    0.9 & single & 0.9494 & 0.8624 & 0.0000 \\
    0.9 & $k=1$ & 0.9891 & 1.0707 & 0.0509 \\
    0.9 & $k=10$ & 0.9814 & 0.9776 & 0.0083 \\
    0.9 & MQ & 0.9933 & 1.1243 & 0.0225 \\
    \bottomrule
  \end{tabular}
\end{table}

\begin{table}[t]
  \centering
  \caption{Oracle-$k$ benchmark among fixed-$k$ rules and relative MQ efficiency across all $(\alpha,\rho)$ scenarios.}
  \label{tab:numexp-oracle}
  \begin{tabular}{cccccccc}
    \toprule
    $\alpha$ & $\rho$ & Oracle $k^\star$ & Oracle cov. & Oracle length & MQ cov. & MQ length & MQ/oracle \\
    \midrule
    0.05 & 0.1 & 1 & 0.9510 & 0.5522 & 0.9817 & 0.6828 & 1.2364 \\
    0.05 & 0.3 & 1 & 0.9577 & 0.6438 & 0.9848 & 0.7722 & 1.1994 \\
    0.05 & 0.5 & 1 & 0.9625 & 0.7500 & 0.9856 & 0.8742 & 1.1656 \\
    0.05 & 0.7 & 1 & 0.9740 & 0.8806 & 0.9888 & 0.9936 & 1.1283 \\
    0.05 & 0.9 & 10 & 0.9814 & 0.9776 & 0.9933 & 1.1243 & 1.1500 \\
    \midrule
    0.10 & 0.1 & 1 & 0.9060 & 0.4555 & 0.9654 & 0.5926 & 1.3012 \\
    0.10 & 0.3 & 1 & 0.9200 & 0.5496 & 0.9724 & 0.6842 & 1.2449 \\
    0.10 & 0.5 & 1 & 0.9333 & 0.6548 & 0.9743 & 0.7841 & 1.1974 \\
    0.10 & 0.7 & 1 & 0.9552 & 0.7856 & 0.9802 & 0.9009 & 1.1468 \\
    0.10 & 0.9 & 11 & 0.9601 & 0.8529 & 0.9868 & 1.0206 & 1.1966 \\
    \bottomrule
  \end{tabular}
\end{table}

\begin{table}[t]
  \centering
  \caption{Runtime benchmark (milliseconds per replication) under the same one-dimensional setup, $\alpha=0.05$, 5000 replications per $\rho$.}
  \label{tab:numexp-runtime}
  \begin{tabular}{ccccc}
    \toprule
    Method & $\rho=0.1$ & $\rho=0.5$ & $\rho=0.9$ & Mean \\
    \midrule
    single & 0.078 & 0.079 & 0.079 & 0.079 \\
    $k=1$ & 0.084 & 0.084 & 0.084 & 0.084 \\
    $k=10$ & 0.086 & 0.086 & 0.086 & 0.086 \\
    MQ & 0.410 & 0.411 & 0.414 & 0.412 \\
    Fisher & 1.206 & 1.206 & 1.201 & 1.204 \\
    Stouffer & 2.294 & 2.283 & 2.268 & 2.282 \\
    Simes & 1.018 & 1.016 & 1.008 & 1.014 \\
    \bottomrule
  \end{tabular}
\end{table}

\begin{figure}[t]
  \centering
  \includegraphics[width=0.98\linewidth]{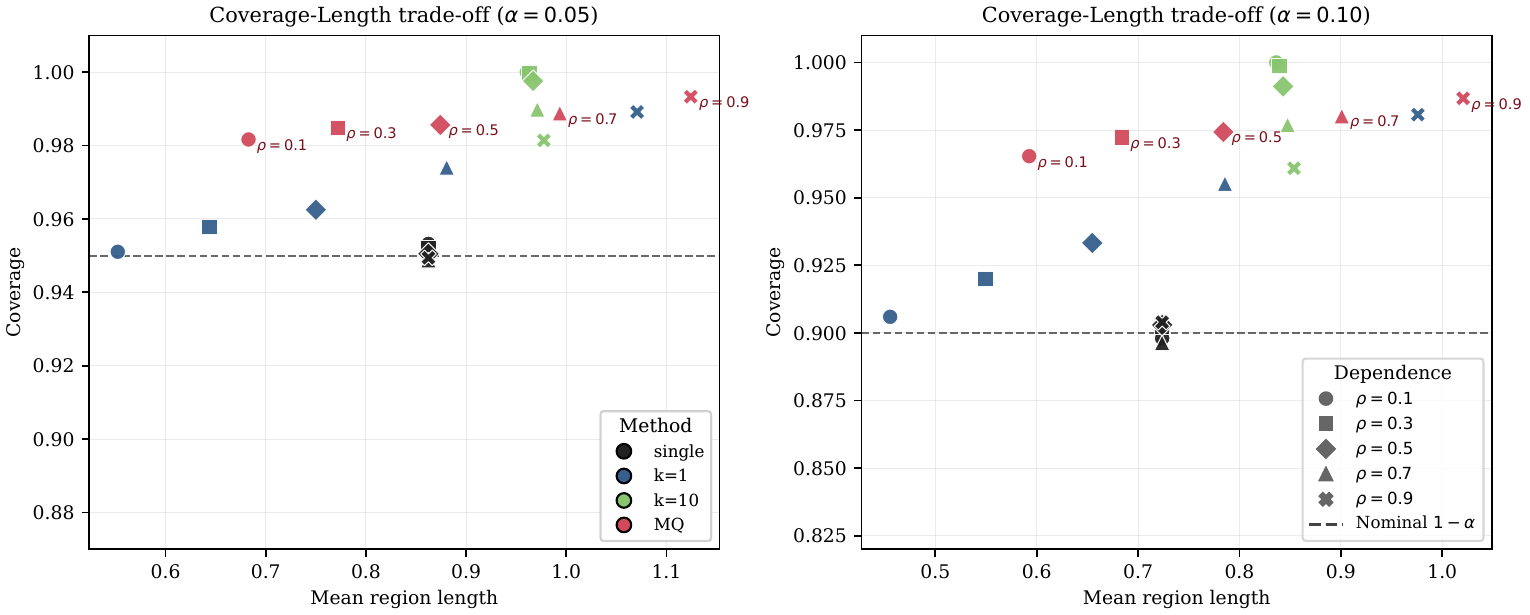}
  \caption{Coverage--length trade-off across dependence regimes in split-and-refit experiments (single, $k=1$, $k=10$, MQ). Each point is a $(\rho,\text{method})$ scenario; dashed lines mark nominal targets $1-\alpha$.}
  \label{fig:numexp-pareto}
\end{figure}

Figure~\ref{fig:numexp-contours} makes the contour-level geometry explicit: the vote-based mergers act as envelope-type operators on the split-wise curves, whereas the continuous-calibrator comparator is smoother but requires pointwise grid evaluation. Table~\ref{tab:numexp-main} confirms the expected conservatism-efficiency trade-off for fixed-$k$ voting and shows that MQ remains stable across dependence levels. Figure~\ref{fig:numexp-pareto} shows the same pattern in a concise four-method view. Relative to $k=10$, MQ reduces over-conservatism in low-to-moderate dependence while avoiding manual choice of a single fixed $k$; at very high dependence, fixed $k=10$ can be closer to nominal in this design. Table~\ref{tab:numexp-oracle} shows that the oracle choice of $k$ is dependence-sensitive: at $\alpha=0.05$ it stays at $k^\star=1$ for $\rho\le 0.7$ but shifts to $k^\star=10$ at $\rho=0.9$, and at $\alpha=0.10$ it shifts to $k^\star=11$ at $\rho=0.9$. Across all scenarios, MQ/oracle mean-length ratios lie between 1.1283 and 1.3012. We therefore interpret MQ as a robust exact default when one prefers not to commit to a dependence-specific threshold, rather than as a universally shortest rule.

We additionally assess sensitivity to the MQ quantile grid in Appendix Table~\ref{tab:lambda-sens}. The dense grid remains close to the default (coverage changes about 0.0005--0.0039 and length ratios about 1.006--1.015 across scenarios), while the coarse grid shifts the operating point more substantially. These results support the default grid as a stable operating choice.

\begin{table}[t]
  \centering
  \caption{Direct split-lottery stability diagnostics in the Gaussian split-and-refit design ($\alpha=0.05$, 20000 replications per $\rho$). Lower/upper SD are the standard deviations of the realized lower and upper endpoints across random split draws; Length IQR is the interquartile range of realized region lengths.}
  \label{tab:stability}
  \begin{tabular}{cccccc}
    \toprule
    $\rho$ & Method & Lower SD & Upper SD & Length SD & Length IQR \\
    \midrule
    0.1 & single & 0.2200 & 0.2200 & 0.0000 & 0.0000 \\
0.1 & k=1 & 0.1283 & 0.1287 & 0.1508 & 0.2018 \\
0.1 & k=10 & 0.0906 & 0.0903 & 0.0247 & 0.0286 \\
0.1 & MQ & 0.1263 & 0.1270 & 0.1471 & 0.1974 \\
0.5 & single & 0.2213 & 0.2213 & 0.0000 & 0.0000 \\
0.5 & k=1 & 0.1759 & 0.1753 & 0.1133 & 0.1502 \\
0.5 & k=10 & 0.1618 & 0.1616 & 0.0182 & 0.0212 \\
0.5 & MQ & 0.1735 & 0.1730 & 0.1055 & 0.1394 \\
0.9 & single & 0.2192 & 0.2192 & 0.0000 & 0.0000 \\
0.9 & k=1 & 0.2111 & 0.2103 & 0.0507 & 0.0681 \\
0.9 & k=10 & 0.2084 & 0.2084 & 0.0082 & 0.0094 \\
0.9 & MQ & 0.2087 & 0.2085 & 0.0222 & 0.0257 
\\
    \bottomrule
  \end{tabular}
\end{table}

Table~\ref{tab:stability} addresses the split-lottery motivation directly. In this stylized design the only randomness is the split-generated estimator draw, so endpoint dispersion across replications measures sensitivity to the split ensemble itself. MQ and $k=10$ substantially reduce endpoint variability relative to single split at low-to-moderate dependence (for example, at $\rho=0.5$, lower-endpoint SD drops from 0.2213 for single split to 0.1735 for MQ and 0.1618 for $k=10$). Single-split length is fixed by construction, so its instability appears through endpoint movement rather than length variation. These diagnostics complement the coverage/length tables by showing that aggregation stabilizes the realized region, not just its average operating characteristics.

Table~\ref{tab:numexp-runtime} quantifies the computational gain of grid-free region operations: MQ requires about 0.412 ms per replication on average, versus 1.204 ms (Fisher), 1.014 ms (Simes), and 2.282 ms (Stouffer) for grid-inversion-based comparators. The absolute times are small in this one-dimensional benchmark because the grid is deliberately modest, but the relative gap matters in the workflows motivating the paper: the same inversion is repeated across many random splits, nominal levels, target covariates, bootstrap-like repetitions, or simulation replications. Sections~\ref{subsec:grid-scaling} and~\ref{subsec:multidim-runtime} therefore report scaling checks that vary the grid resolution and dimension directly.

\begin{table}[t]
  \centering
  \caption{Comparator from a continuous nonincreasing calibrator \(f_\beta(u)=(1-\beta)u^{-\beta}\) (\(\beta=0.85\)); \(\alpha=0.05\), 20000 replications per \(\rho\).}
  \label{tab:contcal-main}
  \begin{tabular}{ccccc}
    \toprule
    $\rho$ & Method & Coverage & Mean length & SD length \\
    \midrule
    0.1 & $k=10$ & 1.0000 & 0.9605 & 0.0245 \\
    0.1 & MQ & 0.9817 & 0.6828 & 0.1477 \\
    0.1 & cont-cal($\beta$=0.85) & 0.9984 & 0.8569 & 0.1244 \\
    0.5 & $k=10$ & 0.9976 & 0.9669 & 0.0185 \\
    0.5 & MQ & 0.9856 & 0.8742 & 0.1055 \\
    0.5 & cont-cal($\beta$=0.85) & 0.9977 & 1.0244 & 0.0820 \\
    0.9 & $k=10$ & 0.9814 & 0.9776 & 0.0083 \\
    0.9 & MQ & 0.9933 & 1.1243 & 0.0225 \\
    0.9 & cont-cal($\beta$=0.85) & 0.9970 & 1.2363 & 0.0203 \\
    \bottomrule
  \end{tabular}
\end{table}

\begin{table}[t]
  \centering
  \caption{Runtime comparison for the continuous-calibrator comparator (\(\alpha=0.05\), 5000 replications per \(\rho\)).}
  \label{tab:contcal-runtime}
  \begin{tabular}{ccccc}
    \toprule
    Method & $\rho=0.1$ & $\rho=0.5$ & $\rho=0.9$ & Mean \\
    \midrule
    $k=10$ & 0.099 & 0.102 & 0.104 & 0.102 \\
    MQ & 0.423 & 0.431 & 0.436 & 0.430 \\
    cont-cal($\beta$=0.85) & 1.206 & 1.228 & 1.244 & 1.226 \\
    \bottomrule
  \end{tabular}
\end{table}

Table~\ref{tab:contcal-main} adds an arbitrary-dependence-valid comparator outside the finite-layer vote class: a continuous calibrator merger implemented by pointwise grid inversion. We fix \(\beta=0.85\) as a representative mid-high choice from the standard power-calibrator family \(f_\beta(u)=(1-\beta)u^{-\beta}\): Appendix Table~\ref{tab:contcal-beta-sens} shows that the qualitative pattern is stable over \(\beta\in\{0.65,0.75,0.85,0.95\}\), and \(\beta=0.85\) sits near the shortest intervals among the tested values in the main split-and-refit design. The comparator is generally more conservative than MQ (for example, at \(\rho=0.5\), coverage/length \(=0.9977/1.0244\) versus \(0.9856/0.8742\) for MQ). Table~\ref{tab:contcal-runtime} shows the corresponding computational cost (mean \(\approx 1.226\) ms for the continuous-calibrator comparator versus \(\approx 0.430\) ms for MQ), illustrating the efficiency--computation trade-off discussed in Section~\ref{sec:discussion}.

Assumption-sensitive comparators in Appendix Tables~\ref{tab:numexp-full-a}--\ref{tab:numexp-full-d} provide context for practical trade-offs: Fisher and Stouffer can under-cover severely under strong dependence (for example, at $\alpha=0.05,\rho=0.9$, coverage is 0.7661 and 0.6816), while Simes is generally conservative. An optional exchangeability-calibrated refinement is reported separately in Appendix~\ref{app:exch-full}: within the same vote architecture, model-based threshold calibration can move coverage closer to nominal and shorten regions, but these gains are assumption-sensitive and are not part of our arbitrary-dependence guarantee.

\subsection{Predictive-region experiment: repeated-split conformal ensembling}\label{subsec:conformal}
To test whether the same aggregation logic remains effective beyond parameter-contour inversion, we consider repeated-split conformal prediction for a scalar response at a test covariate \cite{Vovk2015,BarberCandesRamdasTibshirani2021}. For each split $r$, we fit a base regressor on a training subset and compute calibration scores
\[
  S_i^{(r)}=\left|Y_i-\hat m_r(X_i)\right|,\qquad i\in I_r^{\mathrm{cal}}.
\]
For a candidate response value $y$ at test covariate $x_0$, define
\[
  S_0^{(r)}(y)=\left|y-\hat m_r(x_0)\right|,\qquad
  \pi^{(r)}(y)=\frac{1+\sum_{i\in I_r^{\mathrm{cal}}}\ind{S_i^{(r)}\ge S_0^{(r)}(y)}}{n_{\mathrm{cal}}+1}.
\]
Under exchangeability of calibration and test samples within each split, each $\pi^{(r)}(\cdot)$ is a split-wise predictively valid $p$-value function for the target response value, so the same voting aggregators apply directly.

We use $R=20$ splits with train/calibration fraction $0.25/0.75$, two data-generating scenarios, and random test covariates: S1 (linear homoscedastic) and S2 (nonlinear heteroscedastic). We evaluate $n\in\{600,900\}$ and $\alpha\in\{0.10,0.05\}$ with 20000 replications per $(\text{scenario},n,\alpha)$. Core methods are the same as Section~\ref{subsec:sar}: single split, fixed-$k$ voting, and MQ; Appendix comparators remain Fisher/Stouffer/Simes with grid inversion.

\begin{table}[t]
  \centering
  \caption{Conformal ensembling results for $\alpha=0.05$, $n=600$ (20000 replications).}
  \label{tab:conformal-main}
  \begin{tabular}{ccccc}
    \toprule
    Scenario & Method & Coverage & Mean length & SD length \\
    \midrule
    S1 & single & 0.9531 & 2.3912 & 0.1086 \\
    S1 & $k=1$ & 0.9933 & 3.3777 & 0.2610 \\
    S1 & $k=10$ & 0.9775 & 2.7228 & 0.1171 \\
    S1 & MQ & 0.9907 & 3.1756 & 0.1588 \\
    \midrule
    S2 & single & 0.9521 & 3.3152 & 0.1815 \\
    S2 & $k=1$ & 0.9948 & 5.0231 & 0.4826 \\
    S2 & $k=10$ & 0.9762 & 3.8728 & 0.1946 \\
    S2 & MQ & 0.9915 & 4.6833 & 0.2913 \\
    \bottomrule
  \end{tabular}
\end{table}

Table~\ref{tab:conformal-main} shows that the same stability--conservatism pattern persists: fixed-$k$ and MQ remain conservative, while single split is closer to nominal. The full table in Appendix~\ref{app:conformal-full} confirms this pattern at $n=900$ and $\alpha=0.10$. In this conformal pipeline, assumption-sensitive comparators under-cover substantially (for example, at $\alpha=0.05$, $n=600$, Fisher/Stouffer cover about 0.7541/0.6499 in S2), while Simes remains conservative.

Oracle-$k$ summaries for this conformal pipeline (Appendix Table~\ref{tab:conformal-oracle}) show the same adaptivity pattern: MQ/oracle mean-length ratios range from 1.2163 to 1.3224 across the $(\text{scenario},n,\alpha)$ grid.

\begin{table}[t]
  \centering
  \caption{Conformal runtime benchmark (milliseconds per replication), $\alpha=0.05$, $n=600$, 2000 replications per scenario.}
  \label{tab:conformal-runtime}
  \begin{tabular}{cccc}
    \toprule
    Method & S1 & S2 & Mean \\
    \midrule
    single & 0.006 & 0.006 & 0.006 \\
    $k=1$ & 0.032 & 0.032 & 0.032 \\
    $k=10$ & 0.031 & 0.032 & 0.031 \\
    MQ & 0.133 & 0.134 & 0.134 \\
    Fisher & 0.453 & 0.463 & 0.458 \\
    Stouffer & 0.983 & 0.965 & 0.974 \\
    Simes & 0.230 & 0.235 & 0.232 \\
    \bottomrule
  \end{tabular}
\end{table}

Table~\ref{tab:conformal-runtime} shows that grid-free region operations remain substantially faster in this non-regression pipeline: MQ requires about 0.134 ms per replication on average, versus 0.458 ms (Fisher), 0.232 ms (Simes), and 0.974 ms (Stouffer) for grid-inversion comparators.

Appendix~\ref{app:hd-conformal} reports a separate high-dimensional covariate version of the same conformal workflow with \(n=600\), \(p=100\), sparse linear signal, AR(1) covariates, and ridge regression as the base learner. The inferential object remains a scalar prediction interval, but the base prediction problem is high-dimensional. The same pattern persists: single split is near nominal, while fixed-\(k\) voting and MQ are conservative and longer.

\subsection{Real-data illustrations: repeated-split conformal prediction and coefficient intervals}\label{subsec:realdata}
To complement synthetic designs, we evaluate the same contour-merging rules on two real regression datasets distributed through \texttt{scikit-learn} \cite{PedregosaEtAl2011}: diabetes \cite{EfronEtAl2004} (\(\mathrm{DIA}\), \(n=442\), \(p=10\)) and California housing \cite{PaceBarry1997} (\(\mathrm{CAL}\), \(n=20640\), \(p=8\)). For CAL, we use a fixed-size subsample of \(n=3000\) to keep replication cost comparable across datasets. In each replication, one observation is sampled as the test point \((X_0,Y_0)\), and the remaining observations are used for repeated split conformal ensembling with \(R=20\) splits and train/calibration fraction \(0.25/0.75\). We use a linear base regressor per split, absolute-residual conformity scores, and evaluate \(\alpha\in\{0.10,0.05\}\) over 3000 replications per dataset. Here, coverage denotes the empirical inclusion rate over repeated random choices of the held-out observation and repeated splits of the remaining fixed dataset.

Core methods are single split, fixed-\(k\) voting (\(k\in\{5,10,15\}\)), and MQ with the same \(\lambda\)-grid as above. As in the synthetic conformal experiments, the full tabular results for assumption-sensitive pointwise comparators (Fisher/Stouffer/Simes) are reported in Appendix~\ref{app:realdata-full}; Figure~\ref{fig:realdata-coverage-length} includes them only as reference points.

\begin{table}[H]
  \centering
  \caption{Real-data repeated-split conformal results at \(\alpha=0.05\) (3000 replications per dataset).}
  \label{tab:realdata-main}
  \begin{tabular}{ccccc}
    \toprule
    Dataset & Method & Coverage & Mean length & SD length \\
    \midrule
    DIA & single & 0.9500 & 2.9115 & 0.1378 \\
    DIA & $k=5$ & 0.9790 & 3.3162 & 0.1073 \\
    DIA & $k=10$ & 0.9750 & 3.2568 & 0.0613 \\
    DIA & $k=15$ & 0.9790 & 3.3132 & 0.0963 \\
    DIA & MQ & 0.9910 & 3.8367 & 0.0789 \\
    \midrule
    CAL & single & 0.9503 & 2.4993 & 0.0857 \\
    CAL & $k=5$ & 0.9817 & 3.6224 & 0.1479 \\
    CAL & $k=10$ & 0.9777 & 3.2287 & 0.0381 \\
    CAL & $k=15$ & 0.9703 & 2.9686 & 0.1167 \\
    CAL & MQ & 0.9860 & 3.8446 & 0.1032 \\
    \bottomrule
  \end{tabular}
\end{table}

\begin{figure}[H]
  \centering
  \includegraphics[width=0.95\linewidth]{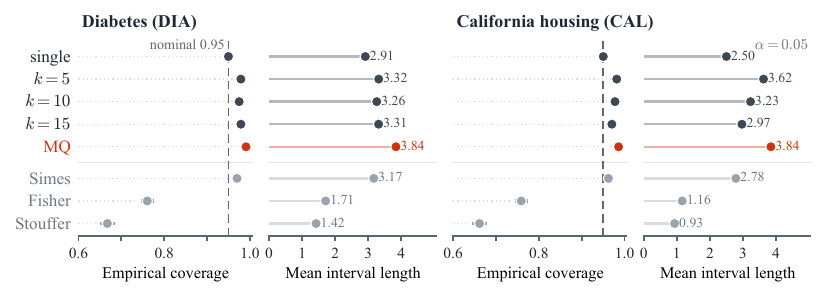}
  \caption{Coverage--length summary for the real-data repeated-split conformal experiments at \(\alpha=0.05\). Monte Carlo 95\% coverage intervals are shown and the dashed line marks nominal 0.95. Core methods correspond to Table~\ref{tab:realdata-main}; Simes, Fisher, and Stouffer are appendix comparators.}
  \label{fig:realdata-coverage-length}
\end{figure}

Table~\ref{tab:realdata-main} and Figure~\ref{fig:realdata-coverage-length} show that the same qualitative pattern appears in these real-data resampling experiments: single split is near nominal at \(\alpha=0.05\) (0.9500 on DIA; 0.9503 on CAL), while fixed-\(k\) and MQ provide conservative coverage. Among core methods, MQ attains the highest empirical coverage on both datasets and avoids pre-specifying \(k\); this adaptivity is accompanied by a visible length increase relative to the shortest fixed-\(k\) choice in each dataset. As in the synthetic experiments, we view MQ here as a robust exact no-\(k\)-tuning option rather than a length-optimal choice. Assumption-sensitive pointwise comparators under-cover substantially on both datasets (for example, at \(\alpha=0.05\), Fisher/Stouffer coverage is 0.7607/0.6677 on DIA and 0.7590/0.6617 on CAL). Full \(\alpha\in\{0.05,0.10\}\) results are reported in Appendix~\ref{app:realdata-full}.

A second real-data check, reported in Table~\ref{tab:realdata-param-main}, applies the same region aggregation idea to coefficient inference rather than prediction. Using the diabetes and California housing datasets, we fit repeated half-sample linear regressions and aggregate split-wise Gaussian coefficient intervals for one representative standardized coefficient on each dataset: \texttt{bmi} for DIA and \texttt{MedInc} for CAL. Because the true coefficient is unknown in fixed real data, this is a descriptive workflow illustration rather than a coverage study or an efficiency comparison with full-data OLS. Its purpose is to show that the same region-level aggregation machinery can be applied to split-wise coefficient intervals under a working Gaussian approximation; the full-data OLS interval is included as a familiar reference scale.

\begin{table}[H]
  \centering
  \small
  \caption{Real-data parameter-inference workflow illustration on the standardized coefficient scale (\(\alpha=0.05\), \(R=20\), deterministic split seeds 20265309 and 20270312). Descriptive fixed-data illustration only; the full-data OLS interval is included as a reference scale, not as an efficiency benchmark.}
  \label{tab:realdata-param-main}
  \begin{tabular}{ccccc}
    \toprule
    Dataset & Coefficient & Method & 95\% interval & Length \\
    \midrule
    DIA & bmi & full-data OLS & [0.241, 0.402] & 0.161 \\
DIA & bmi & single & [0.321, 0.554] & 0.233 \\
DIA & bmi & k=5 & [0.202, 0.441] & 0.239 \\
DIA & bmi & k=10 & [0.186, 0.448] & 0.262 \\
DIA & bmi & MQ & [0.239, 0.435] & 0.196 \\
CAL & MedInc & full-data OLS & [0.745, 0.818] & 0.073 \\
CAL & MedInc & single & [0.763, 0.867] & 0.104 \\
CAL & MedInc & k=5 & [0.748, 0.824] & 0.075 \\
CAL & MedInc & k=10 & [0.736, 0.831] & 0.095 \\
CAL & MedInc & MQ & [0.746, 0.806] & 0.060 
\\
    \bottomrule
  \end{tabular}
\end{table}

A matched runtime benchmark on the same real-data conformal pipeline (Appendix Table~\ref{tab:realdata-runtime}) shows the computational advantage of grid-free contour operations: MQ requires about 0.130 ms per replication on average, versus 0.362 ms (Fisher), 0.144 ms (Simes), and 1.535 ms (Stouffer). Fixed-\(k\) voting is faster still (about 0.03 ms), so MQ provides a robust, no-fixed-\(k\)-tuning option while retaining sub-millisecond runtime.

\subsection{Grid-resolution scaling for exact voting methods}\label{subsec:grid-scaling}
To isolate the role of inversion-grid resolution, we revisit the one-dimensional Gaussian split-and-refit design at \(\rho=0.5\) and \(\alpha=0.05\), varying the grid size used by the pointwise comparators over \(G\in\{201,401,801,1601\}\). Because vote-based methods operate directly on split-wise regions, their exact region computations should be essentially insensitive to \(G\), whereas pointwise grid inversion should grow with the number of evaluation points.

\begin{table}[t]
  \centering
  \caption{Runtime scaling with inversion-grid resolution in the one-dimensional Gaussian split-and-refit design (\(\rho=0.5\), \(\alpha=0.05\), 3000 replications per \(G\)). Values are milliseconds per replication.}
  \label{tab:grid-scaling-runtime}
  \begin{tabular}{cccccc}
    \toprule
    Method & \(G=201\) & \(G=401\) & \(G=801\) & \(G=1601\) & Mean \\
    \midrule
    $k=1$ & 0.089 & 0.090 & 0.095 & 0.102 & 0.094 \\
    $k=10$ & 0.085 & 0.085 & 0.087 & 0.088 & 0.086 \\
    MQ & 0.402 & 0.403 & 0.403 & 0.407 & 0.404 \\
    Fisher-grid & 0.481 & 0.711 & 1.180 & 2.127 & 1.125 \\
    Stouffer-grid & 0.816 & 1.286 & 2.239 & 4.414 & 2.189 \\
    Simes-grid & 0.348 & 0.557 & 0.995 & 2.206 & 1.026 \\
    \bottomrule
  \end{tabular}
\end{table}

Table~\ref{tab:grid-scaling-runtime} shows the expected pattern. The exact vote-based methods are nearly flat as \(G\) increases: \(k=1\) changes from about 0.089 ms to 0.102 ms, \(k=10\) from 0.085 ms to 0.088 ms, and MQ from 0.402 ms to 0.407 ms. By contrast, the pointwise grid-inversion comparators grow materially with \(G\): Fisher rises from 0.481 ms to 2.127 ms, Stouffer from 0.816 ms to 4.414 ms, and Simes from 0.348 ms to 2.206 ms. Thus the important quantity is not only the millisecond count at one grid resolution, but the slope with respect to the inversion grid. This benchmark complements Tables~\ref{tab:numexp-runtime}, \ref{tab:conformal-runtime}, and Appendix~\ref{app:realdata-runtime} by showing directly that MQ's computational advantage comes from eliminating the inversion grid rather than from a favorable fixed implementation constant.

In terms of computational complexity, fixed-\(k\) voting with interval or box primitives requires region construction at one adjusted level and vote/set operations over \(R\) split-wise regions; MQ repeats the same type of operations over \(M\) pre-specified layers. In contrast, pointwise grid inversion evaluates the merged contour on \(G^d\) candidate points in a \(d\)-dimensional Cartesian grid, with leading evaluation cost scaling with \(G^dR\) up to method-specific sorting and merging constants. The timing tables are therefore intended to illustrate this scaling contrast, not merely fixed implementation constants.

\subsection{Multidimensional runtime stress test: grid inversion}\label{subsec:multidim-runtime}
To illustrate the grid-induced dimensional blow-up highlighted in Section~\ref{sec:intro}, we add a dedicated multidimensional stress test in the simplest exact box-geometry setting. For each split $r$ and dimension $d$, we define
\[
  \pi^{(r)}(\theta)
  =
  \min\!\left\{1,\ d\cdot 2\bar\Phi\!\left(\max_{1\le j\le d}\frac{|\theta_j-\hat\theta^{(r)}_j|}{\mathrm{se}}\right)\right\},
\]
so each split region $C_r(\alpha')=\{\theta:\pi^{(r)}(\theta)>\alpha'\}$ is an axis-aligned box in $\mathbb{R}^d$. We use the single-threshold $k=1$ rule as a transparent exact baseline because its aggregated region is simply the intersection of split-wise boxes; for multi-layer rules such as MQ, the cost of exact higher-dimensional set operations becomes geometry-specific rather than grid-driven. We therefore compare this single-threshold grid-free baseline with Fisher/Stouffer/Simes pointwise merging inverted on a Cartesian grid with $G=15$ points per coordinate ($G^d$ total points). We use $R=20$, $\alpha=0.05$, $\mathrm{se}=0.22$, split-correlation parameter $\rho=0.5$, and 20 replications for each $d\in\{1,2,3,4,5\}$.

\begin{table}[t]
  \centering
  \caption{Multidimensional runtime stress test for the single-threshold grid-free baseline (milliseconds per replication).}
  \label{tab:multidim-runtime}
  \begin{tabular}{ccccccc}
    \toprule
    Method & $d=1$ & $d=2$ & $d=3$ & $d=4$ & $d=5$ & Mean \\
    \midrule
    Grid points $G^d$ ($G=15$) & 15 & 225 & 3375 & 50625 & 759375 & -- \\
    $k=1$-vote & 0.088 & 0.090 & 0.124 & 0.184 & 0.199 & 0.137 \\
    Fisher-grid & 0.265 & 0.683 & 8.634 & 155.135 & 2493.705 & 531.684 \\
    Stouffer-grid & 0.384 & 1.099 & 14.487 & 250.463 & 3906.422 & 834.571 \\
    Simes-grid & 0.156 & 0.540 & 8.442 & 152.565 & 2489.577 & 530.256 \\
    \bottomrule
  \end{tabular}
\end{table}

\begin{figure}[t]
  \centering
  \includegraphics[width=0.92\linewidth]{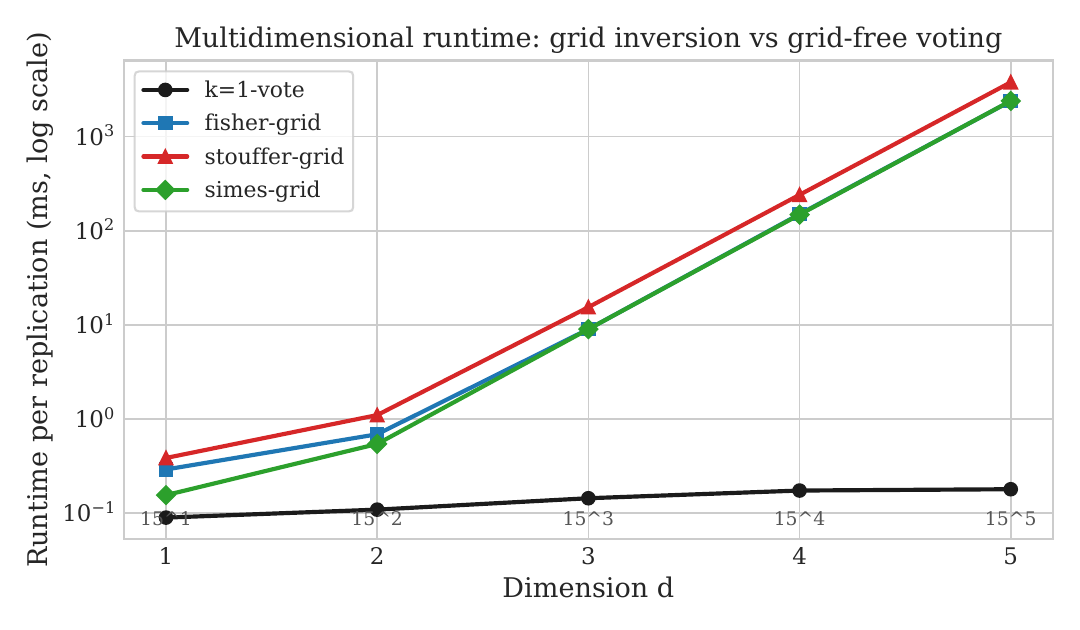}
  \caption{Runtime scaling with dimension $d$ (log scale). Grid-inversion methods increase rapidly with $G^d$, while the single-threshold grid-free vote baseline remains nearly constant.}
  \label{fig:multidim-runtime}
\end{figure}

Table~\ref{tab:multidim-runtime} and Figure~\ref{fig:multidim-runtime} illustrate the expected dimensional blow-up for pointwise inversion in this box-geometry baseline. At $d=5$, the exact single-threshold vote baseline requires about 0.199 ms per replication, versus 2493.705 ms (Fisher), 2489.577 ms (Simes), and 3906.422 ms (Stouffer), corresponding to speedups of roughly $1.25\times 10^4$--$1.96\times 10^4$. This stress test is intentionally simple, but it isolates the mechanism behind the gap: Cartesian inversion scales with \(G^d\), whereas the exact vote baseline operates on split-wise regions and, consistent with Table~\ref{tab:grid-scaling-runtime}, does not depend on the inversion grid. Full mean/median/SD runtime summaries are reported in Appendix~\ref{app:multidim-runtime-full}.

\section{Discussion}\label{sec:discussion}

The computational bottleneck addressed in this paper is repeated grid inversion of pointwise-valid merged contours. Our results show that, in a broad class of contour mergers, this bottleneck can be removed exactly by region-level vote operations. The structural theory of finite-layer vote-implementability explains when such computation is possible; the discussion below situates that perspective relative to adjacent set-merging work and summarizes its practical scope.

\subsection{Relationship to set-level voting and test inversion}
Majority vote aggregation of uncertainty sets has been studied \cite{GasparinRamdas2024}, and recent SAT/data-light work studies uncertainty-set merging through synthetic tests and test inversion \cite{QinHeGangXia2024}. Our contour-level focus complements these set-level perspectives. We study when a merged contour can be inverted to a confidence region using finitely many vote operations, and therefore when the entire nested family can be queried exactly at the region level. This is why those methods are adjacent references rather than direct numerical baselines for our contour-family target: they solve a one-level set-construction problem, whereas our main object is executable contour-family inference.

\subsection{Geometric scope of grid-free gains}
The computational gains reported here are strongest when split-wise regions admit efficient set primitives (membership testing, intersection, and union), such as intervals in one dimension or axis-aligned boxes in moderate dimensions. In more complex geometries (for example, highly nonconvex or implicitly defined regions), vote-based inversion can remain exact but the cost of the underlying set operations may become the dominant bottleneck. In that regime, the practical challenge shifts from grid evaluation of merged contours to computational-geometry routines for region operations.

\subsection{Exchangeability and randomized improvements}
Repeated randomization procedures naturally produce exchangeable (or nearly exchangeable) collections of $p$-values. Under exchangeability, many classical merging rules admit strict improvements \cite{GasparinWangRamdas2025}. Appendix~\ref{app:exch-full} gives one illustrative model-calibrated benchmark: tightening the split-level threshold within the same vote architecture can move coverage closer to nominal and reduce length, but this gain is assumption-sensitive and can deteriorate under dependence misspecification. In practice, we view this as a layered strategy: robust arbitrary-dependence voting as the default, plus model-calibrated refinements when exchangeability assumptions are well justified. Our main guarantees therefore remain anchored in arbitrary dependence.

\subsection{Practical operating guidance}
In practical terms, the proposed voting rules are most useful when split-wise intervals, boxes, or other efficiently manipulable regions are already available from repeated randomization, and the analyst must query many nominal levels, target covariates, or simulation repetitions. In such workflows, fixed-\(k\) and MQ voting replace repeated pointwise contour evaluation over an inversion grid by set operations at a small number of adjusted levels. The price is conservatism relative to dependence-specific or oracle-tuned choices, so the methods are intended as robust exact defaults rather than universally shortest procedures. The default MQ grid $\lambda\in\{0.05,0.25,0.5,0.75,1.0\}$ keeps $M$ small while spanning near-intersection, lower-quartile, majority, upper-quartile, and near-union vote layers; Appendix Table~\ref{tab:lambda-sens} shows that denser grids move the operating point little.

\begin{table}[H]
  \centering
  \footnotesize
  \setlength{\tabcolsep}{3pt}
  \renewcommand{\arraystretch}{1.08}
  \caption{Practical trade-offs among representative contour aggregation strategies.}
  \label{tab:practical-guidance}
  \begin{tabular}{L{0.14\linewidth}L{0.20\linewidth}L{0.20\linewidth}L{0.35\linewidth}}
    \toprule
    Strategy & Assumption burden & Region computation & Most appropriate use \\
    \midrule
    Fixed-\(k\) voting & Arbitrary dependence; user chooses \(k\) & Exact vote operations at one adjusted level & Short, transparent regions when a stable support threshold is justified. \\
    MQ voting & Arbitrary dependence; sparse quantile grid fixed in advance & Exact finite-layer vote operations & Robust default when avoiding dependence-specific tuning matters more than shortest length. \\
    Continuous calibrator & Arbitrary dependence; calibrator shape chosen by user & Pointwise contour evaluation followed by numerical inversion & Low-dimensional settings where smooth contours justify grid inversion. \\
    Assumption-sensitive merging & Exchangeability or model-based dependence calibration & Usually pointwise inversion or calibrated voting & Efficiency-oriented analyses with credible dependence modeling. \\
    \bottomrule
  \end{tabular}
\end{table}

Table~\ref{tab:practical-guidance} summarizes how we intend exact vote-implementability to be used. It is not proposed as a universal replacement for statistical efficiency or simplicity. Rather, it is a design criterion that becomes decisive when exact region computation, repeated queries of a nested family, and arbitrary-dependence validity are all part of the workflow. In small one-dimensional problems, a smooth continuous-calibrator merger may be preferable. In larger or repeatedly queried problems, exact vote operations remove the inversion grid and make the computational cost depend on the number of split-wise regions and vote layers rather than on the number of grid points.

\subsection{Limitations and outlook}
Two natural directions remain. The first is a deeper necessity theory beyond the fixed-level pointwise-universal partial converse in Proposition~\ref{prop:step-necessity}, especially when adjusted levels are allowed to vary with $\alpha$. The technical difficulty is that $\alpha$-dependent level maps can encode richer decision boundaries in count-space, so region-level behavior need not identify a unique calibrator shape. The second is broader empirical coverage across additional domains, including classification and higher-complexity predictive pipelines. Appendix~\ref{app:hd-conformal} adds a \(p=100\) predictive-workflow check, while broader high-dimensional parameter-region validation remains outside the scope of the present empirical study. For classification with a small discrete label space, direct label-wise evaluation is already cheap, so the main computational motivation for grid-free inversion is weaker than in continuous or high-dimensional parameter spaces.

At a higher level, the paper argues that exact executability should be considered as an additional design criterion alongside validity and region size. Pointwise merging answers whether a contour value is legitimate; finite-layer vote-implementability answers whether the induced nested family is exact, queryable, and computationally usable. In repeated-randomization settings, that distinction is operationally consequential. The step-calibrator and multi-quantile constructions developed here provide one concrete class in which these requirements can be met simultaneously.

\clearpage
\renewcommand{\appendixpagename}{Appendix}
\appendixpage
\appendix

\section{Supporting theoretical details}\label{app:theory-details}

\subsection{Validity and normalization facts}\label{app:supporting-validity}

\begin{proposition}[Post-hoc normalization preserves strong validity]\label{prop:normalize}
Let $\tilde\pi_Z(\theta)$ be strongly valid and define $M_Z=\sup_{\vartheta\in\Theta}\tilde\pi_Z(\vartheta)$. Assume $M_Z$ is measurable and finite almost surely (for example, if $\Theta$ is a separable metric space and $\theta\mapsto\tilde\pi_Z(\theta)$ is upper semicontinuous almost surely). Set $\pi_Z^\star(\theta)=\tilde\pi_Z(\theta)/M_Z$ (with $0/0=0$). Then $\pi_Z^\star$ is strongly valid and satisfies \eqref{eq:normalize} with probability one under each $P_{\theta_0}$.
\end{proposition}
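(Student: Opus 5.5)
The plan rests on one pointwise comparison. Since $M_Z=\sup_\vartheta\tilde\pi_Z(\vartheta)\ge\tilde\pi_Z(\theta_0)$, normalization can only raise contour values, so the strong validity of $\pi_Z^\star$ should follow from that of $\tilde\pi_Z$ by monotonicity.

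\emph{Step 1: reduce to the event $\{M_Z>0\}$.} Fix $\theta_0\in\Theta$ and $u\in[0,1]$. On the event $\{M_Z=0\}$ we have $\tilde\pi_Z\equiv 0$, so the convention $0/0=0$ gives $\pi_Z^\star(\theta_0)=0=\tilde\pi_Z(\theta_0)$.

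\emph{Step 2: compare the two contours pointwise.} On $\{0<M_Z<\infty\}$, the definition of $M_Z$ gives $\tilde\pi_Z(\theta_0)\le M_Z$. We also need $M_Z\le 1$. This holds because $\tilde\pi_Z$ takes values in $[0,1]$, which is implicit in Definition~\ref{def:sv}. Hence on this event
\[
\pi_Z^\star(\theta_0)=\tilde\pi_Z(\theta_0)/M_Z\ge\tilde\pi_Z(\theta_0).
\]
Combining with Step 1, $\pi_Z^\star(\theta_0)\ge\tilde\pi_Z(\theta_0)$ almost surely, and consequently
\[
\{\pi_Z^\star(\theta_0)\le u\}\subseteq\{\tilde\pi_Z(\theta_0)\le u\}\cup N,
\]
where $N$ is the null event $\{M_Z=\infty\}$.

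\emph{Step 3: conclude strong validity.} Applying $P_{\theta_0}$ to the inclusion and using the strong validity of $\tilde\pi_Z$ gives $P_{\theta_0}\{\pi_Z^\star(\theta_0)\le u\}\le u$.

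\emph{Step 4: measurability.} The event $\{\pi_Z^\star(\theta_0)\le u\}$ must be measurable. This follows from the assumed measurability of $M_Z$ together with that of $\tilde\pi_Z(\theta_0)$: the ratio of two measurable maps, with the stated $0/0$ convention, is measurable. In the separable upper semicontinuous case, one argues instead that the supremum equals the supremum over a countable dense set. This holds because an u.s.c.\ function's supremum over an open neighborhood is approximated at dense points only from above, so some care is needed. I would simply defer to the stated assumption, which posits measurability directly.

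\emph{Step 5: the normalization property \eqref{eq:normalize}.} When $0<M_Z<\infty$, we have
\[
\sup_\theta\pi_Z^\star(\theta)=\frac{\sup_\theta\tilde\pi_Z(\theta)}{M_Z}=1.
\]

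\emph{Main obstacle: the event $\{M_Z=0\}$.} On this event $\pi_Z^\star\equiv 0$, so \eqref{eq:normalize} fails there. I therefore need $P_{\theta_0}\{M_Z=0\}=0$. Strong validity handles this directly: since $M_Z\ge\tilde\pi_Z(\theta_0)$,
\[
P_{\theta_0}\{M_Z=0\}\le P_{\theta_0}\{\tilde\pi_Z(\theta_0)\le 0\}\le 0.
\]
Hence \eqref{eq:normalize} holds with $P_{\theta_0}$-probability one. This is the step where the quantifier "under each $P_{\theta_0}$" matters, and it is the only nontrivial point beyond the monotonicity comparison.
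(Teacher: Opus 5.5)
Your proposal is correct and follows the paper's proof. Both arguments rest on $M_Z\le 1$ giving $\pi_Z^\star(\theta_0)\ge\tilde\pi_Z(\theta_0)$, and on strong validity at $u=0$ giving $M_Z>0$ almost surely; you handle $\{M_Z=0\}$ through the $0/0=0$ convention and use the $u=0$ argument for the normalization claim, which the paper's proof leaves implicit. One caution on your Step~4 aside, which is not load-bearing: upper semicontinuity does not make $\sup_\Theta\tilde\pi_Z$ equal the supremum over a countable dense set (that is a lower-semicontinuity property; e.g.\ $\ind{\theta=\sqrt{2}}$ is upper semicontinuous on $\mathbb{R}$ but vanishes on $\mathbb{Q}$), so relying on the assumed measurability of $M_Z$ is the right call.
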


\begin{proof}
Fix $\theta_0$ and $u\in[0,1]$. Strong validity at $u=0$ gives $P_{\theta_0}\{\tilde\pi_Z(\theta_0)=0\}=0$, hence $\tilde\pi_Z(\theta_0)>0$ almost surely under $P_{\theta_0}$. Since $M_Z\ge \tilde\pi_Z(\theta_0)$, we have $M_Z>0$ almost surely. On $\{M_Z>0\}$,
\[\{\pi_Z^\star(\theta_0)\le u\}=\{\tilde\pi_Z(\theta_0)\le uM_Z\}\subseteq\{\tilde\pi_Z(\theta_0)\le u\},\]
because \(\tilde\pi_Z\) takes values in \([0,1]\) and hence \(M_Z\le 1\). Therefore $P_{\theta_0}\{\pi_Z^\star(\theta_0)\le u\}\le P_{\theta_0}\{\tilde\pi_Z(\theta_0)\le u\}\le u$.
\end{proof}

\begin{proposition}[Pointwise $p$-merging yields a valid aggregated plausibility function]\label{prop:pmerge}
Let $F:[0,1]^R\to[0,1]$ be Borel measurable and satisfy: for any random vector $(U_1,\dots,U_R)$ with super-uniform marginals (arbitrary dependence allowed), $F(U_1,\dots,U_R)$ is also super-uniform. If $\{\pi^{(r)}\}_{r=1}^R$ are strongly valid, then
\[\pi_F(\theta)=F\bigl(\pi^{(1)}(\theta),\dots,\pi^{(R)}(\theta)\bigr)\]
is strongly valid.
\end{proposition}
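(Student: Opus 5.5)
The plan is to fix an arbitrary $\theta_0\in\Theta$ and reduce the statement to the defining property of $F$, applied to the vector of split-wise contour values at $\theta_0$. Set $U_r=\pi^{(r)}(\theta_0)$ for $r=1,\dots,R$, viewed as random variables under $P_{\theta_0}$ through their dependence on $Z$. Strong validity of each input contour (Definition~\ref{def:sv}) says exactly that $P_{\theta_0}\{U_r\le u\}\le u$ for all $u\in[0,1]$, so each $U_r$ has a super-uniform marginal. The joint law of $(U_1,\dots,U_R)$ is whatever the common data $Z$ induces; the splits may be arbitrarily dependent, and no assumption is imposed on this dependence. This is precisely the situation covered by the hypothesis on $F$, which allows arbitrary dependence.

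The key steps are as follows. First, I will verify that $(U_1,\dots,U_R)$ is a genuine random vector, i.e.\ that each $U_r$ is measurable as a function of $Z$ for the fixed $\theta_0$. This is implicit in Definition~\ref{def:sv}, since the probability in \eqref{eq:sv} must make sense. Second, because $F$ is Borel measurable, $F(U_1,\dots,U_R)=\pi_F(\theta_0)$ is again a random variable. Third, I will invoke the assumed merging property to conclude that $P_{\theta_0}\{\pi_F(\theta_0)\le u\}\le u$ for every $u\in[0,1]$. Finally, I will note that $\theta_0$ was arbitrary and that $\pi_F$ takes values in $[0,1]$ because $F$ maps into $[0,1]$; together these give strong validity of $\pi_F$.

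I do not expect a substantive obstacle. The only point needing care is interpretive: the hypothesis on $F$ quantifies over \emph{all} random vectors with super-uniform marginals on \emph{any} probability space, so it applies to the particular vector induced by $P_{\theta_0}$. If one prefers to read the hypothesis as a statement about distributions on $[0,1]^R$, the same argument goes through by pushing $P_{\theta_0}$ forward to the law of $(U_1,\dots,U_R)$ on $[0,1]^R$ and applying the hypothesis to the coordinate projections.
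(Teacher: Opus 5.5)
Your proposal is correct and follows essentially the same route as the paper's proof. Both fix $\theta_0$, note that the vector $\bigl(\pi^{(1)}(\theta_0),\dots,\pi^{(R)}(\theta_0)\bigr)$ has super-uniform marginals by strong validity, and apply the defining property of $F$; your measurability and pushforward remarks simply make explicit what the paper leaves implicit.
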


\begin{proof}
Fix $\theta_0$. By strong validity, $(\pi^{(1)}(\theta_0),\dots,\pi^{(R)}(\theta_0))$ has super-uniform marginals. By assumption on $F$, $\pi_F(\theta_0)$ is super-uniform.
\end{proof}

\subsection{Step-calibrator vote-implementability}\label{app:stepcal-proof}

\begin{proof}[Proof of Theorem~\ref{thm:step-cal}]
Fix $\theta_0$ and write $U_r=\pi^{(r)}(\theta_0)$. By strong validity, each $U_r$ is super-uniform. For an $M$-step calibrator $f(u)=\sum_{j=1}^M c_j\,\ind{u\le t_j}$ with $\sum_{j=1}^M c_j t_j\le 1$,
\[
  \E_{\theta_0}\bigl[f(U_r)\bigr]
  =
  \sum_{j=1}^M c_j\,P_{\theta_0}\{U_r\le t_j\}
  \le
  \sum_{j=1}^M c_j t_j
  \le 1.
\]
Therefore $e^{(r)}(\theta_0)=f(U_r)$ is an e-value for each $r$, and so $\E_{\theta_0}\bigl[\frac{1}{R}\sum_{r=1}^R e^{(r)}(\theta_0)\bigr]\le 1$. For \(u=1\), strong validity is immediate because \(\pi_f(\theta_0)\le 1\). For \(u\in(0,1)\), Markov's inequality yields
\[
  P_{\theta_0}\{\pi_f(\theta_0)\le u\}
  =
  P_{\theta_0}\left\{\frac{1}{R}\sum_{r=1}^R e^{(r)}(\theta_0)\ge \frac{1}{u}\right\}
  \le u,
\]
which proves strong validity of $\pi_f$.

Now fix $\alpha\in(0,1)$ and $\theta\in\Theta$. For $j=1,\dots,M$ define $N_\theta(t_j)=\sum_{r=1}^R \ind{\pi^{(r)}(\theta)\le t_j}$. Then $\sum_{r=1}^R f(\pi^{(r)}(\theta))=\sum_{j=1}^M c_j N_\theta(t_j)$, hence
\[
  \theta\in C_f(\alpha)
  \iff
  \sum_{j=1}^M c_j N_\theta(t_j) < \frac{R}{\alpha}.
\]
Let
\[
  \mathcal{N}
  =
  \{n\in\{0,\dots,R\}^M: n_1\le\cdots\le n_M\}
\]
and
\[
  \mathcal{A}_\alpha
  =
  \left\{
    n\in\mathcal{N}:\sum_{j=1}^M c_j n_j < \frac{R}{\alpha}
  \right\}.
\]
Since $c_j\ge 0$, the set $\mathcal{A}_\alpha$ is downward closed under the coordinate-wise order. Let $\mathcal{M}_\alpha$ be the set of maximal elements of $\mathcal{A}_\alpha$. Then
\[
  \theta\in C_f(\alpha)
  \iff
  \exists\,m\in\mathcal{M}_\alpha\ \text{s.t.}\ N_\theta(t_j)\le m_j\ \text{for all } j,
\]
and therefore
\[
  C_f(\alpha)
  =
  \bigcup_{m\in\mathcal{M}_\alpha}
  \bigcap_{j=1}^M
  \{\theta: N_\theta(t_j)\le m_j\}.
\]

Finally, for each $j$ and $m_j\le R-1$,
\[
  \begin{aligned}
    \{\theta:N_\theta(t_j)\le m_j\}
    &=
    \left\{\theta:\sum_{r=1}^R \ind{\pi^{(r)}(\theta)>t_j}\ge R-m_j\right\} \\
    &=
    \Vote_{R-m_j}\bigl(C_1(t_j),\dots,C_R(t_j)\bigr),
  \end{aligned}
\]
while the case $m_j=R$ is vacuous. Hence $C_f(\alpha)$ is a finite union of intersections of vote sets at the $M$ levels $t_1,\dots,t_M$, and Definition~\ref{def:vote-impl} holds with adjusted levels $\alpha_j'(\alpha)=t_j$.
\end{proof}

The explicit construction operates in count space. The monotone count vector $\bigl(N_\theta(t_1),\dots,N_\theta(t_M)\bigr)$ takes values in a set of size $\binom{R+M}{M}$, so brute-force enumeration yields an exact implementation whose combinatorial cost depends on $(R,M)$ but not on gridding $\Theta$.

\subsection{Fixed-level partial converse}\label{app:partial-converse}

\begin{definition}[Pointwise-universal fixed-level voting]\label{def:uniform-vote}
For the partial converse below, fix a calibrator-induced merger $\pi_f$ with $R\ge 2$ and levels $0<t_1<\cdots<t_M\le 1$. For $u=(u_1,\dots,u_R)\in[0,1]^R$, define
\[
  N_j(u)=\sum_{r=1}^R \ind{u_r\le t_j},\qquad j=1,\dots,M.
\]
We say that $\pi_f$ is \emph{uniformly $M$-layer vote-implementable at $(t_1,\dots,t_M)$ in the pointwise-universal sense} if, for every $\alpha\in(0,1)$, there exists a Boolean map
\[
  \phi_\alpha:\{0,\dots,R\}^M\to\{0,1\}
\]
such that, for all $u\in[0,1]^R$,
\[
  \ind{\sum_{r=1}^R f(u_r)<R/\alpha}
  =
  \phi_\alpha\bigl(N_1(u),\dots,N_M(u)\bigr).
\]
\end{definition}

\begin{proposition}[Partial converse in the calibrator-induced class]\label{prop:step-necessity}
Fix $R\ge 2$, let $f:[0,1]\to[0,\infty)$ be nonincreasing, and define $\pi_f$ by \eqref{eq:cal-merge}. Assume there exist fixed levels $0<t_1<\cdots<t_M\le 1$ such that $\pi_f$ is uniformly $M$-layer vote-implementable at these levels in the pointwise-universal sense of Definition~\ref{def:uniform-vote}. Assume also that there exists $u_\star\in(0,1]$ with
\[
  (R-1)f(u_\star)>R.
\]
Then $f$ is constant on each right-closed interval $(t_{j-1},t_j]$ for $j=1,\dots,M$ and on $(t_M,1]$, where $t_0=0$. Equivalently, $f$ is a step function on $(0,1]$ with jump set contained in $\{t_1,\dots,t_M\}$.
\end{proposition}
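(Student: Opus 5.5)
The plan is a direct contradiction argument. If $f$ took two different values inside one cell, we build two input vectors with the same count vector whose sums fall on opposite sides of the threshold $R/\alpha$ for a suitable $\alpha$. No Boolean map $\phi_\alpha$ could then represent the indicator, contradicting Definition~\ref{def:uniform-vote}. The nondegeneracy condition $(R-1)f(u_\star)>R$ is used only to make sure the required $\alpha$ lies in $(0,1)$.

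\emph{Step 1 (cells determine counts).} Set $t_0=0$ and $t_{M+1}=1$, and call the intervals $I_j=(t_{j-1},t_j]$ for $j=1,\dots,M$, together with $I_{M+1}=(t_M,1]$ (empty if $t_M=1$), the cells. For $u\in I_j$ we have $\ind{u\le t_i}=\ind{i\ge j}$ for every $i$. Hence if two vectors $u,u'\in[0,1]^R$ differ only in one coordinate, and the two values of that coordinate lie in the same cell, then $N_i(u)=N_i(u')$ for all $i$. Consequently $\phi_\alpha(N(u))=\phi_\alpha(N(u'))$ for every $\alpha$.

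\emph{Step 2 (the construction).} Suppose, for contradiction, that $a,b$ lie in a common cell $I_j$ with $f(a)\neq f(b)$; relabel so that $f(a)>f(b)$. Put $u_1=a$ and $u_r=u_\star$ for $r=2,\dots,R$. Also put $u_1'=b$ and $u_r'=u_\star$ for $r\ge 2$. Write $s=(R-1)f(u_\star)$, which exceeds $R$ by hypothesis. Then
\[
\sum_r f(u_r)=f(a)+s,\qquad \sum_r f(u_r')=f(b)+s.
\]
Choose $\alpha=R/(f(a)+s)$. Since $f(a)+s>f(b)+s\ge s>R$ and $f\ge 0$, this gives $\alpha\in(0,1)$. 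For this $\alpha$ the indicator in Definition~\ref{def:uniform-vote} takes different values on the two vectors:
\[
\ind{f(a)+s<R/\alpha}=0,\qquad \ind{f(b)+s<R/\alpha}=1,
\]
because $f(b)+s<f(a)+s=R/\alpha$. By Step 1, however, both vectors have the same count vector, so $\phi_\alpha$ assigns them the same value. This contradiction shows that $f$ is constant on each cell, which is exactly a step function on $(0,1]$ with jumps only at $t_1,\dots,t_M$.

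\emph{Main obstacle.} The only delicate point is placing the threshold $R/\alpha$ strictly between the two sums while keeping $\alpha<1$. Without the nondegeneracy condition, $f(b)+s$ might lie below $R$, where every $\alpha\in(0,1)$ gives $R/\alpha>R$ and accepts both vectors; differences in $f$ would then be invisible. Filling the remaining $R-1$ coordinates with $u_\star$ shifts both sums above $R$, which resolves this. This is also where $R\ge 2$ is needed. Finiteness of $f$, which holds since $f$ maps into $[0,\infty)$, guarantees $\alpha>0$. Monotonicity of $f$ is not needed in this argument.
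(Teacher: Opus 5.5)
Your proof is correct and takes essentially the same route as the paper. Both pad the vector with $R-1$ copies of $u_\star$, note that two values in the same cell produce identical count vectors, and use $(R-1)f(u_\star)>R$ to keep the separating threshold $R/\alpha$ in the admissible range $c>R$. The only cosmetic difference is where the threshold goes: you set $c=R/\alpha$ exactly at the larger sum, using the strict inequality in the indicator, while the paper chooses $c$ strictly between the two sums.
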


This converse is structural: it rules out exact fixed-level count-based implementation of the calibrator-induced merger as a pointwise rule. It does not exclude accidental finite-level representations for a particular realized contour family whose range is restricted by the data-generating instance.

The extra condition is mild in the calibrators most relevant here. In particular, any calibrator unbounded near $0$ (such as the power family $f_\beta(u)=(1-\beta)u^{-\beta}$) automatically satisfies it for sufficiently small $u_\star$, and a step calibrator satisfies it whenever its first step height exceeds $R/(R-1)$.

\begin{proof}
Write $t_0=0$ and, for $u=(u_1,\dots,u_R)\in[0,1]^R$, define
\[
  N_j(u)=\sum_{r=1}^R \ind{u_r\le t_j},\qquad j=1,\dots,M.
\]
Let $N(u)=(N_1(u),\dots,N_M(u))$.
By pointwise-universal fixed-level vote-implementability, for each $\alpha\in(0,1)$ there exists a map
\[
  \phi_\alpha:\{0,\dots,R\}^M\to\{0,1\}
\]
such that for all $u\in[0,1]^R$,
\[
  \ind{\sum_{r=1}^R f(u_r)<R/\alpha}=\phi_\alpha(N(u)).
\]

Fix one interval in the partition induced by $(t_j)$, namely either $I=(t_{j-1},t_j]$ for some $j\in\{1,\dots,M\}$ or $I=(t_M,1]$. Take arbitrary $x,y\in I$ and define
\[
  u^{(x)}=(u_\star,\dots,u_\star,x),\qquad
  u^{(y)}=(u_\star,\dots,u_\star,y),
\]
with $R-1$ copies of $u_\star$. Since $x$ and $y$ lie in the same interval, they have identical comparisons with each threshold $t_j$, so
\[
  N(u^{(x)})=N(u^{(y)}).
\]
Therefore, for every $\alpha\in(0,1)$,
\[
  \ind{(R-1)f(u_\star)+f(x)<R/\alpha}
  =
  \ind{(R-1)f(u_\star)+f(y)<R/\alpha}.
\]
Set $B=(R-1)f(u_\star)$. By assumption, $B>R$. Since $\alpha\in(0,1)$, reparameterizing $c=R/\alpha$ ranges over all $c>R$, so the previous display is
\[
  \ind{B+f(x)<c}=\ind{B+f(y)<c}\quad\text{for all }c>R.
\]
Because $B>R$ and $f\ge 0$, both numbers $B+f(x)$ and $B+f(y)$ are strictly larger than $R$. If $B+f(x)\neq B+f(y)$, choose $c$ strictly between them; this $c$ is still $>R$, and the two indicators above differ, a contradiction. Hence $B+f(x)=B+f(y)$, so $f(x)=f(y)$.

Since $x,y\in I$ were arbitrary, $f$ is constant on $I$. Repeating for each interval gives constancy on every $(t_{j-1},t_j]$ and on $(t_M,1]$. Because the intervals $(t_{j-1},t_j]$ are right-closed, this also pins down the values at each threshold point $t_j$. Hence $f$ is a step function on $(0,1]$ with jump set contained in $\{t_1,\dots,t_M\}$.
\end{proof}

\subsection{Single-threshold voting details}\label{app:orderstat-details}

\begin{theorem}[Strong validity under arbitrary dependence]\label{thm:order-valid}
Assume each input $\pi^{(r)}$ is strongly valid. No independence across $r$ is assumed. Then $\pi_{k,R}$ is strongly valid.
\end{theorem}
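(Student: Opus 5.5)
The plan is to reduce Theorem~\ref{thm:order-valid} to Theorem~\ref{thm:step-cal} by writing the order-statistic rule as a one-step calibrator merger. A direct counting argument via Markov's inequality serves as a self-contained alternative.

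Fix $\theta_0$, write $U_r=\pi^{(r)}(\theta_0)$, and fix $u\in[0,1]$. The case $u=1$ is trivial because $\pi_{k,R}\le 1$. For $u\in[0,1)$, the $\min\{1,\cdot\}$ truncation does not affect the event, so
\[
\{\pi_{k,R}(\theta_0)\le u\}=\Bigl\{\piOrd{k}(\theta_0)\le \tfrac{k}{R}u\Bigr\}.
\]
Setting $t=(k/R)u$, this is the event
\[
\{N\ge k\},\qquad N=\sum_{r=1}^R\ind{U_r\le t}.
\]
By linearity of expectation, which needs no independence, and super-uniformity of each $U_r$, we get $\E_{\theta_0}[N]\le Rt=ku$. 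Markov's inequality then gives
\[
P_{\theta_0}\{N\ge k\}\le \frac{ku}{k}=u.
\]
The case $u=0$ needs separate handling because $t=0$. There $N$ counts the events $U_r\le 0$, each of which has probability at most $0$. So $N=0$ almost surely and the bound still holds.

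Alternatively, the event can be phrased through Theorem~\ref{thm:step-cal} with the one-step calibrator $f(v)=t^{-1}\ind{v\le t}$, which satisfies $\int_0^1 f=1$. Then $\pi_f(\theta_0)\le u$ if and only if $N/(Rt)\ge 1/u$, that is, $N\ge k$. This reproduces the same Markov bound at each fixed $u$.

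The key observation here is that the construction depends on $u$. The calibrator changes with $u$, so one fixed merger from Theorem~\ref{thm:step-cal} does not cover all $u$ at once. This is harmless because strong validity is a separate inequality for each $u$, so the direct Markov argument at each fixed $u$ is the cleanest route.

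The only real obstacle is handling the boundary cases correctly. Ties and the $\le$ convention in the order statistic matter: $\piOrd{k}\le s$ holds if and only if at least $k$ of the $U_r$ are $\le s$, which is exactly the event $\{N\ge k\}$ used above. The truncation at $1$ and the case $u=0$ also need the separate treatment already described.
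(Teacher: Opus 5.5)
Your proof is correct, and its main argument is exactly the paper's: for $u<1$ you rewrite $\{\pi_{k,R}(\theta_0)\le u\}$ as $\{N(t)\ge k\}$ with $t=(k/R)u$, then apply Markov's inequality using only linearity of expectation and marginal super-uniformity. The one-step-calibrator reduction you also sketch is valid but not needed, and the separate treatment of $u=0$ is unnecessary on the Markov route, since $\E_{\theta_0}N\le 0$ already gives the bound.
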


\begin{proof}
Fix $\theta_0$ and $u\in[0,1)$. The event $\{\pi_{k,R}(\theta_0)\le u\}$ is equivalent to $\{\piOrd{k}(\theta_0)\le (k/R)u\}$. Let $t=(k/R)u$ and define $N(t)=\sum_{r=1}^R \ind{\pi^{(r)}(\theta_0)\le t}$. Then $\{\piOrd{k}(\theta_0)\le t\}=\{N(t)\ge k\}$. Markov's inequality yields
\[P_{\theta_0}\{N(t)\ge k\}\le \frac{\E_{\theta_0}N(t)}{k}=\frac{1}{k}\sum_{r=1}^R P_{\theta_0}\{\pi^{(r)}(\theta_0)\le t\}\le \frac{Rt}{k}=u,\]
using strong validity of each $\pi^{(r)}$.
\end{proof}

\begin{proposition}[Sharpness of the $R/k$ scaling within the scaled order-statistic family]\label{prop:sharp}
Fix $1\le k\le R$ and consider $\pi^{(c)}_{k,R}(\theta)=\min\{1,c\,\piOrd{k}(\theta)\}$ for $c>0$. If $c<R/k$, then $\pi^{(c)}_{k,R}$ fails strong validity for some super-uniform inputs. Consequently, $c=R/k$ is the smallest universal scaling in this family guaranteeing strong validity under arbitrary dependence.
\end{proposition}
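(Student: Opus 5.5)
To prove Proposition~\ref{prop:sharp}, fix $c<R/k$ and construct, for some $u\in(0,1)$, a vector $(U_1,\dots,U_R)$ with uniform marginals and
\[
P\{c\,U_{(k)}\le u\}>u .
\]
Then take strongly valid split-wise contours $\pi^{(r)}(\theta_0)=U_r$. For example, let $P_{\theta_0}$ induce this joint law on the data, or simply regard the inputs as arbitrary super-uniform random variables, as the statement allows. This makes $\pi^{(c)}_{k,R}$ fail \eqref{eq:sv}. Since the event is $\{U_{(k)}\le u/c\}=\{N(u/c)\ge k\}$ with $N(t)=\sum_r\ind{U_r\le t}$, the task is to make the Markov bound used in Theorem~\ref{thm:order-valid} as tight as possible. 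That means concentrating the mass of $N(t)$ on exactly $\{0,k\}$.

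\textbf{Construction.} Set $t=u/c$. I would use a block construction.
\begin{itemize}
\item Partition $[0,1]$ into a "bad" event $A$ of probability $p=Rt/k$, which requires $Rt/k\le1$, i.e.\ $t\le k/R$.
\item On $A$, exactly $k$ of the $U_r$ lie in $[0,t]$ and the rest lie above $t$. Choose the subset of size $k$ uniformly at random among the $\binom{R}{k}$ subsets. Within the subset, draw the $U_r$ independently and uniformly on $[0,t]$.
\item The other coordinates, and all coordinates on $A^c$, are drawn uniformly on $(t,1]$ independently.
\end{itemize}

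\textbf{Marginals.} By symmetry each $U_r$ falls in $[0,t]$ with probability $p\cdot k/R=t$. Conditional on that event it is uniform on $[0,t]$. Otherwise it is uniform on $(t,1]$, with total probability $1-t$. Hence each marginal is exactly Uniform$[0,1]$.

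\textbf{Violation.} We get $P\{U_{(k)}\le t\}=P(A)=Rt/k=Ru/(ck)$. This is strictly larger than $u$ precisely when $c<R/k$. The constraint $t\le k/R$ becomes $u\le ck/R$, which is fine for any small enough $u\in(0,1)$, e.g.\ $u=\min\{1/2,\,ck/R\}$. So any $c<R/k$ fails. Combined with Theorem~\ref{thm:order-valid}, this gives $c=R/k$ as the smallest universal constant.

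\textbf{Main obstacle.} The delicate step is checking that the marginals are exactly uniform, not merely super-uniform. That requires the random choice of the $k$-subset and the careful split of the mass above $t$. The check is a short symmetry computation, but it must also cover the edge case $k=R$, where the event is "all coordinates are small" and $p=t$. The construction then degenerates to the comonotone coupling, which still works. I would also remark that one could instead use the full Rüger-type extremal coupling, but the two-point block construction suffices.
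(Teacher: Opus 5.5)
Your proof is correct and uses the same extremal construction as the paper. On an event of probability $Rt/k$, a uniformly random $k$-subset of coordinates falls at or below $t$ and every other coordinate lies above $t$, so $N(t)$ takes only the values $0$ and $k$ and $P\{U_{(k)}\le t\}=Rt/k>ct$; the paper builds the same thing with point masses at $s$ and $1$, giving two-point super-uniform marginals (which the statement allows), whereas your smoothing to uniform on $[0,t]$ and $(t,1]$ yields exactly uniform marginals, and for $k=R$ your construction is a two-block mixture of independent uniforms rather than the comonotone coupling, though the conclusion $P(A)=t>ct$ is unaffected.
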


\begin{proof}
Fix $c<R/k$. Choose
\[
  s\in\left(0,\min\left\{\frac{k}{R},\frac{1}{c}\right\}\right),
  \qquad
  p=\frac{Rs}{k}\in(0,1].
\]
Define an exchangeable random vector $(U_1,\dots,U_R)$ by
\[
  (U_1,\dots,U_R)=
  \begin{cases}
    (s,\dots,s,1,\dots,1) & \text{with probability } p,\\
    (1,\dots,1) & \text{with probability } 1-p,
  \end{cases}
\]
where exactly $k$ coordinates equal $s$ in the first case and all $\binom{R}{k}$ such placements are equally likely. Then each marginal is
\[
  P\{U_r\le u\}=
  \begin{cases}
    0, & u<s,\\
    pk/R=s, & s\le u<1,\\
    1, & u=1.
  \end{cases}
\]
Hence each $U_r$ is super-uniform, since $0\le u$ for $u<s$, $s\le u$ for $u\in[s,1)$, and equality holds at $u=1$.

On the event with $k$ entries at $s$, we have $U_{(k)}=s$, and otherwise $U_{(k)}=1$, hence
\[
  P\!\left\{\pi^{(c)}_{k,R}\le cs\right\}
  =
  P\{c\,U_{(k)}\le cs\}
  =
  P\{U_{(k)}\le s\}
  =
  p
  =
  \frac{Rs}{k}.
\]
Set $u_0=cs\in(0,1)$. Then
\[
  P\!\left\{\pi^{(c)}_{k,R}\le u_0\right\}
  =
  \frac{Rs}{k}
  >
  cs
  =
  u_0,
\]
so strong validity fails. Therefore no $c<R/k$ is universally valid under arbitrary dependence. This matches the classical sharpness results in \cite{Ruger1978,VovkWangWang2022}.
\end{proof}

\begin{proof}[Proof of Proposition~\ref{prop:voting}]
By definition, $\theta\in C_{k,R}(\alpha)$ iff $\piOrd{k}(\theta)>\alpha'$, which is equivalent to at least $R-k+1$ of the $R$ values $\pi^{(r)}(\theta)$ exceeding $\alpha'$.
\end{proof}

\subsection{Multi-quantile voting details}\label{app:mq-details}

\paragraph{Special cases and Hommel relation.}
If $M=1$ and $\lambda_1=k/R$, then $h_1=1$ and \eqref{eq:mq} reduces exactly to the scaled order-statistic rule \eqref{eq:pi-agg}, namely $\pi_{\mathrm{mq}}=\pi_{k,R}$. If $M=R$ and $\lambda_m=m/R$ for $m=1,\dots,R$, then $k_m=m$ and
\[
  h_M=\sum_{m=1}^R \frac{(m/R)-((m-1)/R)}{m/R}=\sum_{m=1}^R \frac{1}{m}=H_R.
\]
Hence \eqref{eq:mq} becomes
\[
  \pi_{\mathrm{mq}}(\theta)=\min\!\left\{1,\ H_R\min_{m=1,\dots,R}\frac{R}{m}\piOrd{m}(\theta)\right\},
\]
which is the classical Hommel combination (clipped at $1$). In this full-grid case, the vote thresholds in Proposition~\ref{prop:ghom-vote} simplify to $\alpha_m'=\alpha m/(R H_R)$.

\begin{proof}[Proof of Proposition~\ref{prop:ghom-vote}]
Fix $\alpha\in(0,1)$. Since $\alpha<1$, the truncation at $1$ in \eqref{eq:mq} is inactive for the event $\{\pi_{\mathrm{mq}}(\theta)>\alpha\}$, so
\[
  \pi_{\mathrm{mq}}(\theta)>\alpha
  \iff
  h_M\min_{m=1,\dots,M}\frac{\piOrd{k_m}(\theta)}{\lambda_m}>\alpha
  \iff
  \piOrd{k_m}(\theta)>\frac{\alpha\lambda_m}{h_M}\quad\text{for all }m.
\]
Thus $\theta\in C_{\mathrm{mq}}(\alpha)$ iff $\piOrd{k_m}(\theta)>\alpha_m'$ for every $m$. For each fixed $m$, this is equivalent to at least $R-k_m+1$ of the $R$ values $\pi^{(r)}(\theta)$ exceeding $\alpha_m'$, i.e., to membership in $\Vote_{R-k_m+1}(C_1(\alpha_m'),\dots,C_R(\alpha_m'))$. Intersecting over $m=1,\dots,M$ gives \eqref{eq:mq-vote}.
\end{proof}

\paragraph{Admissibility versus implementability.}
Under arbitrary dependence, many classical step-type merging rules are inadmissible and can be dominated by more powerful alternatives \cite{VovkWangWang2022}. This does not conflict with the role of finite-layer voting: admissibility concerns pointwise power, whereas vote-implementability concerns region-level computation. Multi-quantile rules yield a finite collection of vote constraints \eqref{eq:mq-vote}, while many admissible competitors do not admit such a finite-layer region representation.

\section{Full numerical tables}\label{app:num-full}

\subsection{Split-and-refit regression tables}

\clearpage
\begin{sidewaystable}[p]
  \centering
  \scriptsize
  \setlength{\tabcolsep}{12pt}
  \caption{Full Monte Carlo results across all scenarios and methods (20000 replications per scenario). Coverage CI denotes the 95\% Monte Carlo confidence interval. Part 1 of 4.}
  \label{tab:numexp-full-a}
  \begin{tabular}{ccccccccc}
    \toprule
    $\alpha$ & $\rho$ & Tier & Method & Coverage & CI low & CI high & Mean length & SD length \\
    \midrule
    0.05 & 0.1 & core & single & 0.9532 & 0.9503 & 0.9561 & 0.8624 & 0.0000 \\
0.05 & 0.1 & core & k=1 & 0.9510 & 0.9481 & 0.9540 & 0.5522 & 0.1514 \\
0.05 & 0.1 & core & k=5 & 1.0000 & 1.0000 & 1.0000 & 0.7879 & 0.0759 \\
0.05 & 0.1 & core & k=10 & 1.0000 & 1.0000 & 1.0000 & 0.9605 & 0.0245 \\
0.05 & 0.1 & core & k=15 & 1.0000 & 1.0000 & 1.0000 & 1.1618 & 0.0681 \\
0.05 & 0.1 & core & MQ & 0.9817 & 0.9798 & 0.9835 & 0.6828 & 0.1477 \\
0.05 & 0.1 & appendix & fisher & 0.9346 & 0.9312 & 0.9381 & 0.3583 & 0.0832 \\
0.05 & 0.1 & appendix & stouffer & 0.9367 & 0.9334 & 0.9401 & 0.3685 & 0.0717 \\
0.05 & 0.1 & appendix & simes & 0.9495 & 0.9464 & 0.9525 & 0.5264 & 0.1397 \\
0.05 & 0.3 & core & single & 0.9521 & 0.9491 & 0.9551 & 0.8624 & 0.0000 \\
0.05 & 0.3 & core & k=1 & 0.9577 & 0.9550 & 0.9605 & 0.6438 & 0.1338 \\
0.05 & 0.3 & core & k=5 & 0.9982 & 0.9976 & 0.9988 & 0.8247 & 0.0667 \\
0.05 & 0.3 & core & k=10 & 0.9996 & 0.9993 & 0.9999 & 0.9635 & 0.0219 \\
0.05 & 0.3 & core & k=15 & 1.0000 & 1.0000 & 1.0000 & 1.1327 & 0.0598 \\
0.05 & 0.3 & core & MQ & 0.9848 & 0.9831 & 0.9864 & 0.7722 & 0.1285 \\
0.05 & 0.3 & appendix & fisher & 0.8858 & 0.8813 & 0.8902 & 0.4070 & 0.0495 \\
0.05 & 0.3 & appendix & stouffer & 0.8798 & 0.8752 & 0.8843 & 0.4032 & 0.0409 \\
0.05 & 0.3 & appendix & simes & 0.9547 & 0.9519 & 0.9576 & 0.6136 & 0.1206 \\
0.05 & 0.5 & core & single & 0.9505 & 0.9475 & 0.9535 & 0.8624 & 0.0000 \\
0.05 & 0.5 & core & k=1 & 0.9625 & 0.9599 & 0.9651 & 0.7500 & 0.1134 \\
0.05 & 0.5 & core & k=5 & 0.9924 & 0.9912 & 0.9936 & 0.8676 & 0.0562 \\
0.05 & 0.5 & core & k=10 & 0.9976 & 0.9969 & 0.9983 & 0.9669 & 0.0185 \\
0.05 & 0.5 & core & k=15 & 0.9994 & 0.9991 & 0.9997 & 1.0986 & 0.0506 \\
0.05 & 0.5 & core & MQ & 0.9856 & 0.9839 & 0.9873 & 0.8742 & 0.1055 \\
0.05 & 0.5 & appendix & fisher & 0.8346 & 0.8294 & 0.8397 & 0.4451 & 0.0276 \\
0.05 & 0.5 & appendix & stouffer & 0.8127 & 0.8073 & 0.8182 & 0.4240 & 0.0225 \\
0.05 & 0.5 & appendix & simes & 0.9582 & 0.9554 & 0.9610 & 0.7126 & 0.0973 
\\
    \bottomrule
  \end{tabular}
\end{sidewaystable}

\begin{sidewaystable}[p]
  \centering
  \scriptsize
  \setlength{\tabcolsep}{12pt}
  \caption{Full Monte Carlo results across all scenarios and methods. Part 2 of 4.}
  \label{tab:numexp-full-b}
  \begin{tabular}{ccccccccc}
    \toprule
    $\alpha$ & $\rho$ & Tier & Method & Coverage & CI low & CI high & Mean length & SD length \\
    \midrule
    0.05 & 0.7 & core & single & 0.9487 & 0.9457 & 0.9518 & 0.8624 & 0.0000 \\
0.05 & 0.7 & core & k=1 & 0.9740 & 0.9718 & 0.9763 & 0.8806 & 0.0874 \\
0.05 & 0.7 & core & k=5 & 0.9852 & 0.9835 & 0.9869 & 0.9193 & 0.0434 \\
0.05 & 0.7 & core & k=10 & 0.9899 & 0.9885 & 0.9912 & 0.9711 & 0.0142 \\
0.05 & 0.7 & core & k=15 & 0.9947 & 0.9937 & 0.9957 & 1.0576 & 0.0391 \\
0.05 & 0.7 & core & MQ & 0.9888 & 0.9873 & 0.9903 & 0.9936 & 0.0741 \\
0.05 & 0.7 & appendix & fisher & 0.7956 & 0.7901 & 0.8012 & 0.4749 & 0.0130 \\
0.05 & 0.7 & appendix & stouffer & 0.7496 & 0.7435 & 0.7556 & 0.4298 & 0.0120 \\
0.05 & 0.7 & appendix & simes & 0.9686 & 0.9662 & 0.9710 & 0.8275 & 0.0669 \\
0.05 & 0.9 & core & single & 0.9494 & 0.9464 & 0.9524 & 0.8624 & 0.0000 \\
0.05 & 0.9 & core & k=1 & 0.9891 & 0.9877 & 0.9906 & 1.0707 & 0.0509 \\
0.05 & 0.9 & core & k=5 & 0.9830 & 0.9813 & 0.9848 & 0.9951 & 0.0255 \\
0.05 & 0.9 & core & k=10 & 0.9814 & 0.9795 & 0.9833 & 0.9776 & 0.0083 \\
0.05 & 0.9 & core & k=15 & 0.9832 & 0.9814 & 0.9850 & 0.9976 & 0.0228 \\
0.05 & 0.9 & core & MQ & 0.9933 & 0.9922 & 0.9944 & 1.1243 & 0.0225 \\
0.05 & 0.9 & appendix & fisher & 0.7661 & 0.7602 & 0.7720 & 0.4978 & 0.0039 \\
0.05 & 0.9 & appendix & stouffer & 0.6816 & 0.6752 & 0.6881 & 0.4157 & 0.0045 \\
0.05 & 0.9 & appendix & simes & 0.9765 & 0.9744 & 0.9786 & 0.9471 & 0.0179 
\\
    \bottomrule
  \end{tabular}
\end{sidewaystable}

\begin{sidewaystable}[p]
  \centering
  \scriptsize
  \setlength{\tabcolsep}{12pt}
  \caption{Full Monte Carlo results across all scenarios and methods. Part 3 of 4.}
  \label{tab:numexp-full-c}
  \begin{tabular}{ccccccccc}
    \toprule
    $\alpha$ & $\rho$ & Tier & Method & Coverage & CI low & CI high & Mean length & SD length \\
    \midrule
    0.10 & 0.1 & core & single & 0.8980 & 0.8938 & 0.9022 & 0.7237 & 0.0000 \\
0.10 & 0.1 & core & k=1 & 0.9060 & 0.9019 & 0.9100 & 0.4555 & 0.1507 \\
0.10 & 0.1 & core & k=5 & 0.9995 & 0.9992 & 0.9998 & 0.6753 & 0.0751 \\
0.10 & 0.1 & core & k=10 & 1.0000 & 1.0000 & 1.0000 & 0.8363 & 0.0248 \\
0.10 & 0.1 & core & k=15 & 1.0000 & 1.0000 & 1.0000 & 1.0298 & 0.0678 \\
0.10 & 0.1 & core & MQ & 0.9654 & 0.9628 & 0.9679 & 0.5926 & 0.1471 \\
0.10 & 0.1 & appendix & fisher & 0.8908 & 0.8865 & 0.8951 & 0.3154 & 0.0932 \\
0.10 & 0.1 & appendix & stouffer & 0.8919 & 0.8876 & 0.8962 & 0.3252 & 0.0821 \\
0.10 & 0.1 & appendix & simes & 0.9012 & 0.8971 & 0.9053 & 0.4255 & 0.1382 \\
0.10 & 0.3 & core & single & 0.9023 & 0.8982 & 0.9065 & 0.7237 & 0.0000 \\
0.10 & 0.3 & core & k=1 & 0.9200 & 0.9162 & 0.9237 & 0.5496 & 0.1338 \\
0.10 & 0.3 & core & k=5 & 0.9929 & 0.9918 & 0.9941 & 0.7114 & 0.0673 \\
0.10 & 0.3 & core & k=10 & 0.9987 & 0.9981 & 0.9992 & 0.8395 & 0.0218 \\
0.10 & 0.3 & core & k=15 & 0.9997 & 0.9995 & 0.9999 & 1.0008 & 0.0603 \\
0.10 & 0.3 & core & MQ & 0.9724 & 0.9701 & 0.9746 & 0.6842 & 0.1275 \\
0.10 & 0.3 & appendix & fisher & 0.8457 & 0.8407 & 0.8507 & 0.3713 & 0.0566 \\
0.10 & 0.3 & appendix & stouffer & 0.8375 & 0.8323 & 0.8426 & 0.3677 & 0.0498 \\
0.10 & 0.3 & appendix & simes & 0.9129 & 0.9089 & 0.9168 & 0.5145 & 0.1188 \\
0.10 & 0.5 & core & single & 0.9030 & 0.8988 & 0.9071 & 0.7237 & 0.0000 \\
0.10 & 0.5 & core & k=1 & 0.9333 & 0.9298 & 0.9367 & 0.6548 & 0.1138 \\
0.10 & 0.5 & core & k=5 & 0.9793 & 0.9773 & 0.9813 & 0.7545 & 0.0565 \\
0.10 & 0.5 & core & k=10 & 0.9911 & 0.9899 & 0.9924 & 0.8431 & 0.0185 \\
0.10 & 0.5 & core & k=15 & 0.9974 & 0.9967 & 0.9981 & 0.9666 & 0.0508 \\
0.10 & 0.5 & core & MQ & 0.9743 & 0.9721 & 0.9765 & 0.7841 & 0.1044 \\
0.10 & 0.5 & appendix & fisher & 0.8026 & 0.7971 & 0.8082 & 0.4137 & 0.0313 \\
0.10 & 0.5 & appendix & stouffer & 0.7800 & 0.7742 & 0.7857 & 0.3944 & 0.0262 \\
0.10 & 0.5 & appendix & simes & 0.9234 & 0.9197 & 0.9271 & 0.6107 & 0.0953 
\\
    \bottomrule
  \end{tabular}
\end{sidewaystable}

\begin{sidewaystable}[p]
  \centering
  \scriptsize
  \setlength{\tabcolsep}{12pt}
  \caption{Full Monte Carlo results across all scenarios and methods. Part 4 of 4.}
  \label{tab:numexp-full-d}
  \begin{tabular}{ccccccccc}
    \toprule
    $\alpha$ & $\rho$ & Tier & Method & Coverage & CI low & CI high & Mean length & SD length \\
    \midrule
    0.10 & 0.7 & core & single & 0.8963 & 0.8921 & 0.9006 & 0.7237 & 0.0000 \\
0.10 & 0.7 & core & k=1 & 0.9552 & 0.9523 & 0.9581 & 0.7856 & 0.0876 \\
0.10 & 0.7 & core & k=5 & 0.9678 & 0.9654 & 0.9703 & 0.8069 & 0.0433 \\
0.10 & 0.7 & core & k=10 & 0.9769 & 0.9749 & 0.9790 & 0.8475 & 0.0143 \\
0.10 & 0.7 & core & k=15 & 0.9865 & 0.9848 & 0.9881 & 0.9255 & 0.0391 \\
0.10 & 0.7 & core & MQ & 0.9802 & 0.9783 & 0.9821 & 0.9009 & 0.0721 \\
0.10 & 0.7 & appendix & fisher & 0.7648 & 0.7589 & 0.7707 & 0.4467 & 0.0141 \\
0.10 & 0.7 & appendix & stouffer & 0.7168 & 0.7106 & 0.7230 & 0.4046 & 0.0129 \\
0.10 & 0.7 & appendix & simes & 0.9405 & 0.9373 & 0.9438 & 0.7215 & 0.0634 \\
0.10 & 0.9 & core & single & 0.9040 & 0.8999 & 0.9080 & 0.7237 & 0.0000 \\
0.10 & 0.9 & core & k=1 & 0.9807 & 0.9788 & 0.9826 & 0.9759 & 0.0504 \\
0.10 & 0.9 & core & k=5 & 0.9674 & 0.9649 & 0.9699 & 0.8827 & 0.0251 \\
0.10 & 0.9 & core & k=10 & 0.9609 & 0.9582 & 0.9636 & 0.8538 & 0.0083 \\
0.10 & 0.9 & core & k=15 & 0.9636 & 0.9610 & 0.9661 & 0.8655 & 0.0226 \\
0.10 & 0.9 & core & MQ & 0.9868 & 0.9852 & 0.9883 & 1.0206 & 0.0190 \\
0.10 & 0.9 & appendix & fisher & 0.7421 & 0.7361 & 0.7482 & 0.4709 & 0.0040 \\
0.10 & 0.9 & appendix & stouffer & 0.6553 & 0.6488 & 0.6619 & 0.3917 & 0.0049 \\
0.10 & 0.9 & appendix & simes & 0.9531 & 0.9502 & 0.9560 & 0.8253 & 0.0142 
\\
    \bottomrule
  \end{tabular}
\end{sidewaystable}
\clearpage

\subsection{MQ quantile-grid sensitivity}\label{app:lambda-sens}
\begingroup
\scriptsize
\setlength{\tabcolsep}{4pt}
\begin{longtable}{ccccccccc}
\caption{Sensitivity of MQ to the quantile grid $\{\lambda_m\}$ in the split-and-refit design. Grids are: coarse $(0.10,0.50,1.00)$, default $(0.05,0.25,0.50,0.75,1.00)$, and dense $(0.05,0.15,0.25,0.35,0.50,0.65,0.75,0.85,1.00)$. Each scenario uses 20000 replications. Ratio is mean length relative to the default grid in the same $(\alpha,\rho)$ scenario.}\label{tab:lambda-sens}\\
\toprule
$\alpha$ & $\rho$ & Grid & Coverage & CI low & CI high & Mean len. & SD len. & Ratio \\
\midrule
\endfirsthead
\toprule
$\alpha$ & $\rho$ & Grid & Coverage & CI low & CI high & Mean len. & SD len. & Ratio \\
\midrule
\endhead
\bottomrule
\endfoot
0.05 & 0.1 & coarse & 0.9990 & 0.9986 & 0.9995 & 0.7598 & 0.1122 & 1.113 \\
0.05 & 0.1 & default & 0.9817 & 0.9798 & 0.9835 & 0.6828 & 0.1477 & 1.000 \\
0.05 & 0.1 & dense & 0.9838 & 0.9820 & 0.9855 & 0.6915 & 0.1442 & 1.013 \\
0.05 & 0.3 & coarse & 0.9957 & 0.9948 & 0.9966 & 0.8292 & 0.0986 & 1.074 \\
0.05 & 0.3 & default & 0.9848 & 0.9831 & 0.9864 & 0.7722 & 0.1285 & 1.000 \\
0.05 & 0.3 & dense & 0.9861 & 0.9845 & 0.9877 & 0.7800 & 0.1251 & 1.010 \\
0.05 & 0.5 & coarse & 0.9931 & 0.9920 & 0.9942 & 0.9093 & 0.0814 & 1.040 \\
0.05 & 0.5 & default & 0.9856 & 0.9839 & 0.9873 & 0.8742 & 0.1055 & 1.000 \\
0.05 & 0.5 & dense & 0.9874 & 0.9859 & 0.9889 & 0.8808 & 0.1022 & 1.008 \\
0.05 & 0.7 & coarse & 0.9911 & 0.9898 & 0.9924 & 1.0033 & 0.0593 & 1.010 \\
0.05 & 0.7 & default & 0.9888 & 0.9873 & 0.9903 & 0.9936 & 0.0741 & 1.000 \\
0.05 & 0.7 & dense & 0.9895 & 0.9880 & 0.9909 & 0.9996 & 0.0719 & 1.006 \\
0.05 & 0.9 & coarse & 0.9921 & 0.9909 & 0.9934 & 1.1032 & 0.0174 & 0.981 \\
0.05 & 0.9 & default & 0.9933 & 0.9922 & 0.9944 & 1.1243 & 0.0225 & 1.000 \\
0.05 & 0.9 & dense & 0.9938 & 0.9927 & 0.9949 & 1.1350 & 0.0231 & 1.009 \\
0.10 & 0.1 & coarse & 0.9954 & 0.9945 & 0.9963 & 0.6666 & 0.1115 & 1.125 \\
0.10 & 0.1 & default & 0.9654 & 0.9628 & 0.9679 & 0.5926 & 0.1471 & 1.000 \\
0.10 & 0.1 & dense & 0.9692 & 0.9668 & 0.9716 & 0.6016 & 0.1437 & 1.015 \\
0.10 & 0.3 & coarse & 0.9910 & 0.9897 & 0.9923 & 0.7368 & 0.0969 & 1.077 \\
0.10 & 0.3 & default & 0.9724 & 0.9701 & 0.9746 & 0.6842 & 0.1275 & 1.000 \\
0.10 & 0.3 & dense & 0.9751 & 0.9730 & 0.9773 & 0.6919 & 0.1240 & 1.011 \\
0.10 & 0.5 & coarse & 0.9851 & 0.9835 & 0.9868 & 0.8143 & 0.0808 & 1.038 \\
0.10 & 0.5 & default & 0.9743 & 0.9721 & 0.9765 & 0.7841 & 0.1044 & 1.000 \\
0.10 & 0.5 & dense & 0.9764 & 0.9742 & 0.9785 & 0.7911 & 0.1011 & 1.009 \\
0.10 & 0.7 & coarse & 0.9827 & 0.9808 & 0.9845 & 0.9064 & 0.0568 & 1.006 \\
0.10 & 0.7 & default & 0.9802 & 0.9783 & 0.9821 & 0.9009 & 0.0721 & 1.000 \\
0.10 & 0.7 & dense & 0.9817 & 0.9798 & 0.9835 & 0.9076 & 0.0700 & 1.007 \\
0.10 & 0.9 & coarse & 0.9845 & 0.9828 & 0.9862 & 0.9958 & 0.0143 & 0.976 \\
0.10 & 0.9 & default & 0.9868 & 0.9852 & 0.9883 & 1.0206 & 0.0190 & 1.000 \\
0.10 & 0.9 & dense & 0.9879 & 0.9863 & 0.9894 & 1.0333 & 0.0197 & 1.012 
\\
\end{longtable}
\endgroup

\subsection{Continuous-calibrator comparator under arbitrary dependence}\label{app:contcal}
\begingroup
\scriptsize
\setlength{\tabcolsep}{4pt}
\begin{longtable}{cccccccc}
\caption{Continuous-calibrator comparator \(f_\beta(u)=(1-\beta)u^{-\beta}\) with \(\beta=0.85\), compared with \(k=10\) voting and MQ in the split-and-refit design. Each scenario uses 20000 replications.}\label{tab:contcal-full}\\
\toprule
$\alpha$ & $\rho$ & Method & Coverage & CI low & CI high & Mean len. & SD len. \\
\midrule
\endfirsthead
\toprule
$\alpha$ & $\rho$ & Method & Coverage & CI low & CI high & Mean len. & SD len. \\
\midrule
\endhead
\bottomrule
\endfoot
0.05 & 0.1 & cont-cal($\beta$=0.85) & 0.9984 & 0.9979 & 0.9990 & 0.8569 & 0.1244 \\
0.05 & 0.1 & k=10 & 1.0000 & 1.0000 & 1.0000 & 0.9605 & 0.0245 \\
0.05 & 0.1 & MQ & 0.9817 & 0.9798 & 0.9835 & 0.6828 & 0.1477 \\
0.05 & 0.3 & cont-cal($\beta$=0.85) & 0.9977 & 0.9970 & 0.9984 & 0.9365 & 0.1048 \\
0.05 & 0.3 & k=10 & 0.9996 & 0.9993 & 0.9999 & 0.9635 & 0.0219 \\
0.05 & 0.3 & MQ & 0.9848 & 0.9831 & 0.9864 & 0.7722 & 0.1285 \\
0.05 & 0.5 & cont-cal($\beta$=0.85) & 0.9977 & 0.9970 & 0.9983 & 1.0244 & 0.0820 \\
0.05 & 0.5 & k=10 & 0.9976 & 0.9969 & 0.9983 & 0.9669 & 0.0185 \\
0.05 & 0.5 & MQ & 0.9856 & 0.9839 & 0.9873 & 0.8742 & 0.1055 \\
0.05 & 0.7 & cont-cal($\beta$=0.85) & 0.9966 & 0.9958 & 0.9974 & 1.1231 & 0.0543 \\
0.05 & 0.7 & k=10 & 0.9899 & 0.9885 & 0.9912 & 0.9711 & 0.0142 \\
0.05 & 0.7 & MQ & 0.9888 & 0.9873 & 0.9903 & 0.9936 & 0.0741 \\
0.05 & 0.9 & cont-cal($\beta$=0.85) & 0.9970 & 0.9962 & 0.9978 & 1.2363 & 0.0203 \\
0.05 & 0.9 & k=10 & 0.9814 & 0.9795 & 0.9833 & 0.9776 & 0.0083 \\
0.05 & 0.9 & MQ & 0.9933 & 0.9922 & 0.9944 & 1.1243 & 0.0225 \\
0.10 & 0.1 & cont-cal($\beta$=0.85) & 0.9958 & 0.9949 & 0.9967 & 0.7557 & 0.1222 \\
0.10 & 0.1 & k=10 & 1.0000 & 1.0000 & 1.0000 & 0.8363 & 0.0248 \\
0.10 & 0.1 & MQ & 0.9654 & 0.9628 & 0.9679 & 0.5926 & 0.1471 \\
0.10 & 0.3 & cont-cal($\beta$=0.85) & 0.9950 & 0.9940 & 0.9959 & 0.8358 & 0.1024 \\
0.10 & 0.3 & k=10 & 0.9987 & 0.9981 & 0.9992 & 0.8395 & 0.0218 \\
0.10 & 0.3 & MQ & 0.9724 & 0.9701 & 0.9746 & 0.6842 & 0.1275 \\
0.10 & 0.5 & cont-cal($\beta$=0.85) & 0.9939 & 0.9928 & 0.9950 & 0.9209 & 0.0799 \\
0.10 & 0.5 & k=10 & 0.9911 & 0.9899 & 0.9924 & 0.8431 & 0.0185 \\
0.10 & 0.5 & MQ & 0.9743 & 0.9721 & 0.9765 & 0.7841 & 0.1044 \\
0.10 & 0.7 & cont-cal($\beta$=0.85) & 0.9929 & 0.9917 & 0.9941 & 1.0167 & 0.0522 \\
0.10 & 0.7 & k=10 & 0.9769 & 0.9749 & 0.9790 & 0.8475 & 0.0143 \\
0.10 & 0.7 & MQ & 0.9802 & 0.9783 & 0.9821 & 0.9009 & 0.0721 \\
0.10 & 0.9 & cont-cal($\beta$=0.85) & 0.9940 & 0.9930 & 0.9951 & 1.1250 & 0.0189 \\
0.10 & 0.9 & k=10 & 0.9609 & 0.9582 & 0.9636 & 0.8538 & 0.0083 \\
0.10 & 0.9 & MQ & 0.9868 & 0.9852 & 0.9883 & 1.0206 & 0.0190 \\

\end{longtable}
\endgroup

\begin{table}[t]
  \centering
  \scriptsize
  \caption{Sensitivity of the continuous-calibrator comparator to \(\beta\) in the split-and-refit design (\(\alpha=0.05\), 10000 replications per \(\rho\)).}
  \label{tab:contcal-beta-sens}
  \begin{tabular}{ccccccc}
    \toprule
    $\rho$ & $\beta$ & Coverage & CI low & CI high & Mean length & SD length \\
    \midrule
    0.1 & 0.65 & 0.9992 & 0.9986 & 0.9998 & 0.9709 & 0.1148 \\
0.1 & 0.75 & 0.9985 & 0.9977 & 0.9993 & 0.8930 & 0.1223 \\
0.1 & 0.85 & 0.9981 & 0.9972 & 0.9990 & 0.8553 & 0.1239 \\
0.1 & 0.95 & 0.9985 & 0.9977 & 0.9993 & 0.8975 & 0.1314 \\
0.5 & 0.65 & 0.9990 & 0.9984 & 0.9996 & 1.1263 & 0.0731 \\
0.5 & 0.75 & 0.9984 & 0.9976 & 0.9992 & 1.0562 & 0.0778 \\
0.5 & 0.85 & 0.9976 & 0.9966 & 0.9986 & 1.0242 & 0.0816 \\
0.5 & 0.95 & 0.9977 & 0.9968 & 0.9986 & 1.0717 & 0.0879 \\
0.9 & 0.65 & 0.9980 & 0.9971 & 0.9989 & 1.3099 & 0.0165 \\
0.9 & 0.75 & 0.9964 & 0.9952 & 0.9976 & 1.2541 & 0.0182 \\
0.9 & 0.85 & 0.9957 & 0.9944 & 0.9970 & 1.2363 & 0.0201 \\
0.9 & 0.95 & 0.9980 & 0.9971 & 0.9989 & 1.3003 & 0.0232 
\\
    \bottomrule
  \end{tabular}
\end{table}

\subsection{Exchangeability-sensitive split-and-refit benchmark}\label{app:exch-full}
\begingroup
\scriptsize
\setlength{\tabcolsep}{4pt}
\begin{longtable}{ccccccccc}
\caption{Exchangeability-sensitive split-and-refit benchmark under the equicorrelated Gaussian model. Each scenario uses 20000 replications; calibrated thresholds $t_{10,\alpha,\rho}$ are estimated from 120000 null draws.}\label{tab:exch-full}\\
\toprule
$\alpha$ & $\rho$ & Method & Coverage & CI low & CI high & Mean len. & SD len. & $t_{10,\alpha,\rho}$ \\
\midrule
\endfirsthead
\toprule
$\alpha$ & $\rho$ & Method & Coverage & CI low & CI high & Mean len. & SD len. & $t_{10,\alpha,\rho}$ \\
\midrule
\endhead
\bottomrule
\endfoot
0.05 & 0.1 & single & 0.9532 & 0.9503 & 0.9561 & 0.8624 & 0.0000 & 0.2929 \\
0.05 & 0.1 & k=10 & 1.0000 & 1.0000 & 1.0000 & 0.9605 & 0.0245 & 0.2929 \\
0.05 & 0.1 & MQ & 0.9817 & 0.9798 & 0.9835 & 0.6828 & 0.1477 & 0.2929 \\
0.05 & 0.1 & $k=10$ (exch-cal) & 0.9474 & 0.9443 & 0.9505 & 0.4093 & 0.0544 & 0.2929 \\
0.05 & 0.3 & single & 0.9521 & 0.9491 & 0.9551 & 0.8624 & 0.0000 & 0.2186 \\
0.05 & 0.3 & k=10 & 0.9996 & 0.9993 & 0.9999 & 0.9635 & 0.0219 & 0.2186 \\
0.05 & 0.3 & MQ & 0.9848 & 0.9831 & 0.9864 & 0.7722 & 0.1285 & 0.2186 \\
0.05 & 0.3 & $k=10$ (exch-cal) & 0.9517 & 0.9488 & 0.9547 & 0.5157 & 0.0250 & 0.2186 \\
0.05 & 0.5 & single & 0.9505 & 0.9475 & 0.9535 & 0.8624 & 0.0000 & 0.1374 \\
0.05 & 0.5 & k=10 & 0.9976 & 0.9969 & 0.9983 & 0.9669 & 0.0185 & 0.1374 \\
0.05 & 0.5 & MQ & 0.9856 & 0.9839 & 0.9873 & 0.8742 & 0.1055 & 0.1374 \\
0.05 & 0.5 & $k=10$ (exch-cal) & 0.9496 & 0.9466 & 0.9527 & 0.6342 & 0.0186 & 0.1374 \\
0.05 & 0.7 & single & 0.9487 & 0.9457 & 0.9518 & 0.8624 & 0.0000 & 0.0895 \\
0.05 & 0.7 & k=10 & 0.9899 & 0.9885 & 0.9912 & 0.9711 & 0.0142 & 0.0895 \\
0.05 & 0.7 & MQ & 0.9888 & 0.9873 & 0.9903 & 0.9936 & 0.0741 & 0.0895 \\
0.05 & 0.7 & $k=10$ (exch-cal) & 0.9493 & 0.9462 & 0.9523 & 0.7320 & 0.0142 & 0.0895 \\
0.05 & 0.9 & single & 0.9494 & 0.9464 & 0.9524 & 0.8624 & 0.0000 & 0.0591 \\
0.05 & 0.9 & k=10 & 0.9814 & 0.9795 & 0.9833 & 0.9776 & 0.0083 & 0.0591 \\
0.05 & 0.9 & MQ & 0.9933 & 0.9922 & 0.9944 & 1.1243 & 0.0225 & 0.0591 \\
0.05 & 0.9 & $k=10$ (exch-cal) & 0.9500 & 0.9470 & 0.9531 & 0.8217 & 0.0083 & 0.0591 \\
0.10 & 0.1 & single & 0.8980 & 0.8938 & 0.9022 & 0.7237 & 0.0000 & 0.3328 \\
0.10 & 0.1 & k=10 & 1.0000 & 1.0000 & 1.0000 & 0.8363 & 0.0248 & 0.3328 \\
0.10 & 0.1 & MQ & 0.9654 & 0.9628 & 0.9679 & 0.5926 & 0.1471 & 0.3328 \\
0.10 & 0.1 & $k=10$ (exch-cal) & 0.9027 & 0.8986 & 0.9068 & 0.3558 & 0.0666 & 0.3328 \\
0.10 & 0.3 & single & 0.9023 & 0.8982 & 0.9065 & 0.7237 & 0.0000 & 0.2849 \\
0.10 & 0.3 & k=10 & 0.9987 & 0.9981 & 0.9992 & 0.8395 & 0.0218 & 0.2849 \\
0.10 & 0.3 & MQ & 0.9724 & 0.9701 & 0.9746 & 0.6842 & 0.1275 & 0.2849 \\
0.10 & 0.3 & $k=10$ (exch-cal) & 0.9011 & 0.8970 & 0.9052 & 0.4386 & 0.0318 & 0.2849 \\
0.10 & 0.5 & single & 0.9030 & 0.8988 & 0.9071 & 0.7237 & 0.0000 & 0.2109 \\
0.10 & 0.5 & k=10 & 0.9911 & 0.9899 & 0.9924 & 0.8431 & 0.0185 & 0.2109 \\
0.10 & 0.5 & MQ & 0.9743 & 0.9721 & 0.9765 & 0.7841 & 0.1044 & 0.2109 \\
0.10 & 0.5 & $k=10$ (exch-cal) & 0.8996 & 0.8954 & 0.9038 & 0.5308 & 0.0189 & 0.2109 \\
0.10 & 0.7 & single & 0.8963 & 0.8921 & 0.9006 & 0.7237 & 0.0000 & 0.1502 \\
0.10 & 0.7 & k=10 & 0.9769 & 0.9749 & 0.9790 & 0.8475 & 0.0143 & 0.1502 \\
0.10 & 0.7 & MQ & 0.9802 & 0.9783 & 0.9821 & 0.9009 & 0.0721 & 0.1502 \\
0.10 & 0.7 & $k=10$ (exch-cal) & 0.8970 & 0.8928 & 0.9012 & 0.6181 & 0.0143 & 0.1502 \\
0.10 & 0.9 & single & 0.9040 & 0.8999 & 0.9080 & 0.7237 & 0.0000 & 0.1121 \\
0.10 & 0.9 & k=10 & 0.9609 & 0.9582 & 0.9636 & 0.8538 & 0.0083 & 0.1121 \\
0.10 & 0.9 & MQ & 0.9868 & 0.9852 & 0.9883 & 1.0206 & 0.0190 & 0.1121 \\
0.10 & 0.9 & $k=10$ (exch-cal) & 0.9011 & 0.8969 & 0.9052 & 0.6905 & 0.0083 & 0.1121 \\

\end{longtable}
\endgroup

\subsection{Repeated-split conformal ensembling tables}\label{app:conformal-full}

\begingroup
\scriptsize
\setlength{\tabcolsep}{3pt}
\begin{longtable}{cccccccccc}
\caption{Full conformal-ensembling Monte Carlo results across scenarios, sample sizes, and methods (20000 replications per $(\text{scenario},n,\alpha)$ setting). Coverage CI denotes the 95\% Monte Carlo confidence interval.}\label{tab:conformal-full}\\
\toprule
$\alpha$ & Scen. & $n$ & Tier & Method & Coverage & CI low & CI high & Mean len. & SD len. \\
\midrule
\endfirsthead
\toprule
$\alpha$ & Scen. & $n$ & Tier & Method & Coverage & CI low & CI high & Mean len. & SD len. \\
\midrule
\endhead
\bottomrule
\endfoot
0.05 & S1 & 600 & core & single & 0.9531 & 0.9502 & 0.9560 & 2.3912 & 0.1086 \\
0.05 & S1 & 600 & core & k=1 & 0.9933 & 0.9922 & 0.9945 & 3.3777 & 0.2610 \\
0.05 & S1 & 600 & core & k=5 & 0.9858 & 0.9841 & 0.9874 & 2.9826 & 0.1526 \\
0.05 & S1 & 600 & core & k=10 & 0.9775 & 0.9754 & 0.9795 & 2.7228 & 0.1171 \\
0.05 & S1 & 600 & core & k=15 & 0.9725 & 0.9702 & 0.9748 & 2.6313 & 0.1064 \\
0.05 & S1 & 600 & core & MQ & 0.9907 & 0.9894 & 0.9920 & 3.1756 & 0.1588 \\
0.05 & S1 & 600 & appendix & fisher & 0.7530 & 0.7471 & 0.7590 & 1.3969 & 0.0506 \\
0.05 & S1 & 600 & appendix & stouffer & 0.6486 & 0.6420 & 0.6553 & 1.1199 & 0.0441 \\
0.05 & S1 & 600 & appendix & simes & 0.9683 & 0.9658 & 0.9707 & 2.5665 & 0.1011 \\
0.10 & S1 & 600 & core & single & 0.9018 & 0.8976 & 0.9059 & 1.9933 & 0.0844 \\
0.10 & S1 & 600 & core & k=1 & 0.9894 & 0.9880 & 0.9908 & 3.1462 & 0.2005 \\
0.10 & S1 & 600 & core & k=5 & 0.9723 & 0.9700 & 0.9746 & 2.6287 & 0.1142 \\
0.10 & S1 & 600 & core & k=10 & 0.9529 & 0.9500 & 0.9558 & 2.3764 & 0.0909 \\
0.10 & S1 & 600 & core & k=15 & 0.9384 & 0.9351 & 0.9417 & 2.2439 & 0.0837 \\
0.10 & S1 & 600 & core & MQ & 0.9801 & 0.9782 & 0.9820 & 2.7980 & 0.1171 \\
0.10 & S1 & 600 & appendix & fisher & 0.7285 & 0.7223 & 0.7347 & 1.3231 & 0.0489 \\
0.10 & S1 & 600 & appendix & stouffer & 0.6172 & 0.6105 & 0.6240 & 1.0506 & 0.0426 \\
0.10 & S1 & 600 & appendix & simes & 0.9307 & 0.9272 & 0.9343 & 2.1808 & 0.0832 \\
0.05 & S2 & 600 & core & single & 0.9521 & 0.9491 & 0.9551 & 3.3152 & 0.1815 \\
0.05 & S2 & 600 & core & k=1 & 0.9948 & 0.9937 & 0.9958 & 5.0231 & 0.4826 \\
0.05 & S2 & 600 & core & k=5 & 0.9861 & 0.9845 & 0.9877 & 4.3279 & 0.2634 \\
0.05 & S2 & 600 & core & k=10 & 0.9762 & 0.9741 & 0.9784 & 3.8728 & 0.1946 \\
0.05 & S2 & 600 & core & k=15 & 0.9715 & 0.9692 & 0.9738 & 3.7077 & 0.1728 \\
0.05 & S2 & 600 & core & MQ & 0.9915 & 0.9902 & 0.9927 & 4.6833 & 0.2913 \\
0.05 & S2 & 600 & appendix & fisher & 0.7541 & 0.7482 & 0.7601 & 1.8219 & 0.0725 \\
0.05 & S2 & 600 & appendix & stouffer & 0.6499 & 0.6432 & 0.6565 & 1.4386 & 0.0614 \\
0.05 & S2 & 600 & appendix & simes & 0.9676 & 0.9651 & 0.9701 & 3.5931 & 0.1608 \\
0.10 & S2 & 600 & core & single & 0.9015 & 0.8974 & 0.9057 & 2.6918 & 0.1287 \\
0.10 & S2 & 600 & core & k=1 & 0.9919 & 0.9906 & 0.9931 & 4.6063 & 0.3526 \\
0.10 & S2 & 600 & core & k=5 & 0.9711 & 0.9688 & 0.9734 & 3.7157 & 0.1853 \\
0.10 & S2 & 600 & core & k=10 & 0.9527 & 0.9498 & 0.9556 & 3.2910 & 0.1424 \\
0.10 & S2 & 600 & core & k=15 & 0.9392 & 0.9358 & 0.9425 & 3.0750 & 0.1278 \\
0.10 & S2 & 600 & core & MQ & 0.9792 & 0.9773 & 0.9812 & 3.9907 & 0.1957 \\
0.10 & S2 & 600 & appendix & fisher & 0.7298 & 0.7237 & 0.7360 & 1.7182 & 0.0692 \\
0.10 & S2 & 600 & appendix & stouffer & 0.6200 & 0.6133 & 0.6267 & 1.3445 & 0.0590 \\
0.10 & S2 & 600 & appendix & simes & 0.9327 & 0.9292 & 0.9362 & 2.9741 & 0.1261 \\
0.05 & S1 & 900 & core & single & 0.9493 & 0.9463 & 0.9523 & 2.3812 & 0.0880 \\
0.05 & S1 & 900 & core & k=1 & 0.9966 & 0.9958 & 0.9974 & 3.5365 & 0.2522 \\
0.05 & S1 & 900 & core & k=5 & 0.9858 & 0.9842 & 0.9874 & 2.9585 & 0.1238 \\
0.05 & S1 & 900 & core & k=10 & 0.9750 & 0.9729 & 0.9772 & 2.7280 & 0.0976 \\
0.05 & S1 & 900 & core & k=15 & 0.9659 & 0.9634 & 0.9684 & 2.5852 & 0.0855 \\
0.05 & S1 & 900 & core & MQ & 0.9905 & 0.9892 & 0.9918 & 3.1069 & 0.1261 \\
0.05 & S1 & 900 & appendix & fisher & 0.7512 & 0.7453 & 0.7572 & 1.3931 & 0.0417 \\
0.05 & S1 & 900 & appendix & stouffer & 0.6464 & 0.6398 & 0.6531 & 1.1151 & 0.0367 \\
0.05 & S1 & 900 & appendix & simes & 0.9619 & 0.9593 & 0.9646 & 2.5324 & 0.0830 \\
0.10 & S1 & 900 & core & single & 0.8991 & 0.8949 & 0.9032 & 1.9903 & 0.0679 \\
0.10 & S1 & 900 & core & k=1 & 0.9919 & 0.9907 & 0.9932 & 3.1827 & 0.1682 \\
0.10 & S1 & 900 & core & k=5 & 0.9699 & 0.9675 & 0.9723 & 2.6508 & 0.0956 \\
0.10 & S1 & 900 & core & k=10 & 0.9491 & 0.9461 & 0.9521 & 2.3703 & 0.0740 \\
0.10 & S1 & 900 & core & k=15 & 0.9335 & 0.9300 & 0.9370 & 2.2174 & 0.0672 \\
0.10 & S1 & 900 & core & MQ & 0.9761 & 0.9740 & 0.9783 & 2.7490 & 0.0953 \\
0.10 & S1 & 900 & appendix & fisher & 0.7269 & 0.7208 & 0.7331 & 1.3195 & 0.0403 \\
0.10 & S1 & 900 & appendix & stouffer & 0.6153 & 0.6086 & 0.6221 & 1.0461 & 0.0357 \\
0.10 & S1 & 900 & appendix & simes & 0.9244 & 0.9207 & 0.9280 & 2.1483 & 0.0686 \\
0.05 & S2 & 900 & core & single & 0.9505 & 0.9475 & 0.9535 & 3.2980 & 0.1458 \\
0.05 & S2 & 900 & core & k=1 & 0.9954 & 0.9945 & 0.9964 & 5.3305 & 0.4801 \\
0.05 & S2 & 900 & core & k=5 & 0.9846 & 0.9828 & 0.9863 & 4.2867 & 0.2120 \\
0.05 & S2 & 900 & core & k=10 & 0.9750 & 0.9729 & 0.9772 & 3.8796 & 0.1623 \\
0.05 & S2 & 900 & core & k=15 & 0.9674 & 0.9649 & 0.9698 & 3.6295 & 0.1386 \\
0.05 & S2 & 900 & core & MQ & 0.9885 & 0.9870 & 0.9899 & 4.5522 & 0.2248 \\
0.05 & S2 & 900 & appendix & fisher & 0.7528 & 0.7468 & 0.7588 & 1.8174 & 0.0602 \\
0.05 & S2 & 900 & appendix & stouffer & 0.6432 & 0.6366 & 0.6499 & 1.4332 & 0.0517 \\
0.05 & S2 & 900 & appendix & simes & 0.9637 & 0.9612 & 0.9663 & 3.5364 & 0.1322 \\
0.10 & S2 & 900 & core & single & 0.9000 & 0.8958 & 0.9042 & 2.6881 & 0.1041 \\
0.10 & S2 & 900 & core & k=1 & 0.9899 & 0.9885 & 0.9913 & 4.6725 & 0.2972 \\
0.10 & S2 & 900 & core & k=5 & 0.9710 & 0.9687 & 0.9733 & 3.7510 & 0.1561 \\
0.10 & S2 & 900 & core & k=10 & 0.9506 & 0.9476 & 0.9536 & 3.2803 & 0.1176 \\
0.10 & S2 & 900 & core & k=15 & 0.9352 & 0.9318 & 0.9386 & 3.0338 & 0.1049 \\
0.10 & S2 & 900 & core & MQ & 0.9760 & 0.9739 & 0.9781 & 3.9040 & 0.1580 \\
0.10 & S2 & 900 & appendix & fisher & 0.7255 & 0.7193 & 0.7316 & 1.7144 & 0.0580 \\
0.10 & S2 & 900 & appendix & stouffer & 0.6151 & 0.6084 & 0.6218 & 1.3392 & 0.0495 \\
0.10 & S2 & 900 & appendix & simes & 0.9271 & 0.9234 & 0.9307 & 2.9246 & 0.1060 \\

\end{longtable}
\endgroup

\subsection{High-dimensional covariate conformal check}\label{app:hd-conformal}
As an additional predictive-workflow check, we repeat the conformal experiment with \(n=600\) training observations and \(p=100\) covariates. Covariates follow a Gaussian AR(1) design with correlation \(\rho_X=0.5\), the regression coefficient has 10 nonzero entries, and the base learner is ridge regression with fixed penalty 10. The two scenarios are homoscedastic sparse linear regression (HD1) and heteroscedastic sparse linear regression (HD2). The conformal output remains a scalar prediction interval for \(Y_0\), so this check stresses the covariate side of the prediction workflow rather than high-dimensional region geometry.

\begingroup
\tiny
\setlength{\tabcolsep}{4pt}
\begin{longtable}{cccccccc}
\caption{High-dimensional covariate repeated-split conformal experiment (\(n=600\), \(p=100\), \(R=20\), train/calibration fraction \(0.25/0.75\), 2000 replications per scenario). Coverage CI denotes the 95\% Monte Carlo confidence interval. Time is aggregation time in milliseconds per replication after the split-wise conformal summaries are available.}\label{tab:hd-conformal}\\
\toprule
Scen. & \(\alpha\) & Method & Coverage & Coverage CI & Mean len. & SD len. & Time ms \\
\midrule
\endfirsthead
\toprule
Scen. & \(\alpha\) & Method & Coverage & Coverage CI & Mean len. & SD len. & Time ms \\
\midrule
\endhead
\bottomrule
\endfoot
HD1 & 0.05 & single & 0.9480 & [0.9383,0.9577] & 5.8653 & 0.3933 & 0.005 \\
HD1 & 0.05 & k=5 & 0.9890 & [0.9844,0.9936] & 6.1922 & 0.3919 & 0.060 \\
HD1 & 0.05 & k=10 & 0.9930 & [0.9893,0.9967] & 6.5721 & 0.2718 & 0.053 \\
HD1 & 0.05 & k=15 & 0.9980 & [0.9960,1.0000] & 7.3666 & 0.3912 & 0.054 \\
HD1 & 0.05 & mq & 0.9990 & [0.9976,1.0000] & 7.7501 & 0.4102 & 0.240 \\
HD1 & 0.05 & fisher & 0.8120 & [0.7949,0.8291] & 3.1280 & 0.1589 & 0.583 \\
HD1 & 0.05 & stouffer & 0.7795 & [0.7613,0.7977] & 2.9028 & 0.1364 & 2.303 \\
HD1 & 0.05 & simes & 0.9680 & [0.9603,0.9757] & 5.3464 & 0.6183 & 0.221 \\
HD1 & 0.10 & single & 0.8975 & [0.8842,0.9108] & 4.8867 & 0.3186 & 0.006 \\
HD1 & 0.10 & k=5 & 0.9755 & [0.9687,0.9823] & 5.3110 & 0.3712 & 0.065 \\
HD1 & 0.10 & k=10 & 0.9840 & [0.9785,0.9895] & 5.7297 & 0.2370 & 0.054 \\
HD1 & 0.10 & k=15 & 0.9915 & [0.9875,0.9955] & 6.4224 & 0.3733 & 0.055 \\
HD1 & 0.10 & mq & 0.9935 & [0.9900,0.9970] & 6.6742 & 0.3732 & 0.242 \\
HD1 & 0.10 & fisher & 0.7825 & [0.7644,0.8006] & 2.9279 & 0.1675 & 0.632 \\
HD1 & 0.10 & stouffer & 0.7565 & [0.7377,0.7753] & 2.7226 & 0.1438 & 2.350 \\
HD1 & 0.10 & simes & 0.9405 & [0.9301,0.9509] & 4.6084 & 0.5614 & 0.226 \\
HD2 & 0.05 & single & 0.9480 & [0.9383,0.9577] & 6.1337 & 0.4236 & 0.004 \\
HD2 & 0.05 & k=5 & 0.9895 & [0.9850,0.9940] & 6.7029 & 0.4654 & 0.060 \\
HD2 & 0.05 & k=10 & 0.9905 & [0.9862,0.9948] & 6.9529 & 0.3182 & 0.053 \\
HD2 & 0.05 & k=15 & 0.9950 & [0.9919,0.9981] & 7.6978 & 0.4244 & 0.054 \\
HD2 & 0.05 & mq & 0.9965 & [0.9939,0.9991] & 8.5458 & 0.5331 & 0.239 \\
HD2 & 0.05 & fisher & 0.8015 & [0.7840,0.8190] & 3.2298 & 0.1697 & 0.579 \\
HD2 & 0.05 & stouffer & 0.7730 & [0.7546,0.7914] & 2.9784 & 0.1569 & 2.319 \\
HD2 & 0.05 & simes & 0.9730 & [0.9659,0.9801] & 5.9481 & 0.6228 & 0.218 \\
HD2 & 0.10 & single & 0.8875 & [0.8737,0.9013] & 5.0596 & 0.3391 & 0.006 \\
HD2 & 0.10 & k=5 & 0.9695 & [0.9620,0.9770] & 5.6498 & 0.4169 & 0.065 \\
HD2 & 0.10 & k=10 & 0.9760 & [0.9693,0.9827] & 5.9827 & 0.2664 & 0.054 \\
HD2 & 0.10 & k=15 & 0.9855 & [0.9803,0.9907] & 6.6422 & 0.3933 & 0.055 \\
HD2 & 0.10 & mq & 0.9930 & [0.9893,0.9967] & 7.2420 & 0.4367 & 0.241 \\
HD2 & 0.10 & fisher & 0.7740 & [0.7557,0.7923] & 3.0185 & 0.1777 & 0.627 \\
HD2 & 0.10 & stouffer & 0.7405 & [0.7213,0.7597] & 2.7885 & 0.1660 & 2.357 \\
HD2 & 0.10 & simes & 0.9460 & [0.9361,0.9559] & 5.0361 & 0.5542 & 0.224

\end{longtable}
\endgroup

\subsection{Non-Gaussian dependence check}\label{app:nongaussian-dependence}
To check that the split-and-refit conclusions are not tied only to equicorrelated Gaussian dependence, we repeat the first benchmark under a non-Gaussian copula. Specifically, we draw the split dependence from a \(t\)-copula with 3 degrees of freedom and equicorrelation parameter \(\rho\in\{0.5,0.9\}\), then transform each margin to \(N(0,1)\). Thus each split-wise Gaussian pivot remains marginally calibrated, while dependence across splits is non-Gaussian and tail-dependent. Each scenario uses 20000 replications.

\begingroup
\scriptsize
\setlength{\tabcolsep}{4pt}
\begin{longtable}{cccccccc}
\caption{Non-Gaussian dependence check for the split-and-refit design. Dependence is generated by a \(t\)-copula with 3 degrees of freedom and normal margins; each scenario uses 20000 replications. Coverage CI denotes the 95\% Monte Carlo confidence interval.}\label{tab:nongaussian-dependence}\\
\toprule
\(\alpha\) & \(\rho\) & Method & Coverage & CI low & CI high & Mean len. & SD len. \\
\midrule
\endfirsthead
\toprule
\(\alpha\) & \(\rho\) & Method & Coverage & CI low & CI high & Mean len. & SD len. \\
\midrule
\endhead
\bottomrule
\endfoot
0.05 & 0.5 & Single & 0.9513 & 0.9483 & 0.9543 & 0.8624 & 0.0000 \\
0.05 & 0.5 & $k=1$ & 0.9882 & 0.9868 & 0.9897 & 0.8148 & 0.1966 \\
0.05 & 0.5 & $k=10$ & 0.9877 & 0.9861 & 0.9892 & 0.9666 & 0.0256 \\
0.05 & 0.5 & MQ & 0.9938 & 0.9927 & 0.9949 & 0.9211 & 0.1820 \\
0.05 & 0.5 & Fisher & 0.8067 & 0.8013 & 0.8122 & 0.4321 & 0.1047 \\
0.05 & 0.5 & Stouffer & 0.7630 & 0.7571 & 0.7688 & 0.4013 & 0.0844 \\
0.05 & 0.5 & Simes & 0.9778 & 0.9758 & 0.9799 & 0.7532 & 0.1759 \\
0.05 & 0.9 & Single & 0.9478 & 0.9448 & 0.9509 & 0.8624 & 0.0000 \\
0.05 & 0.9 & $k=1$ & 0.9928 & 0.9916 & 0.9940 & 1.0918 & 0.1194 \\
0.05 & 0.9 & $k=10$ & 0.9771 & 0.9750 & 0.9792 & 0.9781 & 0.0098 \\
0.05 & 0.9 & MQ & 0.9903 & 0.9890 & 0.9917 & 1.1081 & 0.0736 \\
0.05 & 0.9 & Fisher & 0.7672 & 0.7613 & 0.7731 & 0.4959 & 0.0277 \\
0.05 & 0.9 & Stouffer & 0.6814 & 0.6749 & 0.6879 & 0.4131 & 0.0185 \\
0.05 & 0.9 & Simes & 0.9714 & 0.9691 & 0.9737 & 0.9286 & 0.0686 \\
0.10 & 0.5 & Single & 0.9015 & 0.8974 & 0.9056 & 0.7237 & 0.0000 \\
0.10 & 0.5 & $k=1$ & 0.9739 & 0.9717 & 0.9762 & 0.7190 & 0.1973 \\
0.10 & 0.5 & $k=10$ & 0.9726 & 0.9703 & 0.9748 & 0.8417 & 0.0304 \\
0.10 & 0.5 & MQ & 0.9859 & 0.9843 & 0.9876 & 0.8263 & 0.1810 \\
0.10 & 0.5 & Fisher & 0.7748 & 0.7690 & 0.7805 & 0.3991 & 0.1106 \\
0.10 & 0.5 & Stouffer & 0.7254 & 0.7192 & 0.7316 & 0.3701 & 0.0911 \\
0.10 & 0.5 & Simes & 0.9510 & 0.9480 & 0.9540 & 0.6452 & 0.1724 \\
0.10 & 0.9 & Single & 0.9000 & 0.8958 & 0.9042 & 0.7237 & 0.0000 \\
0.10 & 0.9 & $k=1$ & 0.9867 & 0.9851 & 0.9882 & 0.9952 & 0.1221 \\
0.10 & 0.9 & $k=10$ & 0.9529 & 0.9500 & 0.9558 & 0.8542 & 0.0104 \\
0.10 & 0.9 & MQ & 0.9806 & 0.9786 & 0.9825 & 1.0007 & 0.0728 \\
0.10 & 0.9 & Fisher & 0.7425 & 0.7364 & 0.7486 & 0.4683 & 0.0304 \\
0.10 & 0.9 & Stouffer & 0.6534 & 0.6468 & 0.6600 & 0.3882 & 0.0209 \\
0.10 & 0.9 & Simes & 0.9421 & 0.9389 & 0.9453 & 0.8027 & 0.0663 \\

\end{longtable}
\endgroup

Table~\ref{tab:nongaussian-dependence} shows the same qualitative pattern as the Gaussian benchmark. Single split stays near the nominal level, fixed-\(k\) voting and MQ remain conservative under the non-Gaussian dependence, and Fisher/Stouffer under-cover substantially. Simes is conservative in these scenarios. These results indicate that the main empirical pattern is not an artifact of Gaussian equicorrelation.

\subsection{Real-data conformal tables (DIA and CAL)}\label{app:realdata-full}
\begingroup
\scriptsize
\setlength{\tabcolsep}{4pt}
\begin{longtable}{ccccccccc}
\caption{Real-data repeated-split conformal results on DIA (diabetes) and CAL (California housing subsample, \(n=3000\)); 3000 replications per dataset. Coverage CI denotes the 95\% Monte Carlo confidence interval.}\label{tab:realdata-full}\\
\toprule
Dataset & $\alpha$ & Tier & Method & Coverage & CI low & CI high & Mean len. & SD len. \\
\midrule
\endfirsthead
\toprule
Dataset & $\alpha$ & Tier & Method & Coverage & CI low & CI high & Mean len. & SD len. \\
\midrule
\endhead
\bottomrule
\endfoot
CAL & 0.05 & core & single & 0.9503 & 0.9426 & 0.9581 & 2.4993 & 0.0857 \\
CAL & 0.05 & core & k=5 & 0.9817 & 0.9769 & 0.9865 & 3.6224 & 0.1479 \\
CAL & 0.05 & core & k=10 & 0.9777 & 0.9724 & 0.9830 & 3.2287 & 0.0381 \\
CAL & 0.05 & core & k=15 & 0.9703 & 0.9643 & 0.9764 & 2.9686 & 0.1167 \\
CAL & 0.05 & core & MQ & 0.9860 & 0.9818 & 0.9902 & 3.8446 & 0.1032 \\
CAL & 0.05 & appendix & fisher & 0.7590 & 0.7437 & 0.7743 & 1.1628 & 0.0705 \\
CAL & 0.05 & appendix & stouffer & 0.6617 & 0.6447 & 0.6786 & 0.9328 & 0.0580 \\
CAL & 0.05 & appendix & simes & 0.9623 & 0.9555 & 0.9691 & 2.7849 & 0.1517 \\
CAL & 0.10 & core & single & 0.8990 & 0.8882 & 0.9098 & 1.8051 & 0.0437 \\
CAL & 0.10 & core & k=5 & 0.9713 & 0.9654 & 0.9773 & 3.0915 & 0.1496 \\
CAL & 0.10 & core & k=10 & 0.9530 & 0.9454 & 0.9606 & 2.4782 & 0.0414 \\
CAL & 0.10 & core & k=15 & 0.9360 & 0.9272 & 0.9448 & 2.1802 & 0.1091 \\
CAL & 0.10 & core & MQ & 0.9773 & 0.9720 & 0.9827 & 3.2529 & 0.1304 \\
CAL & 0.10 & appendix & fisher & 0.7327 & 0.7168 & 0.7485 & 1.0968 & 0.0713 \\
CAL & 0.10 & appendix & stouffer & 0.6263 & 0.6090 & 0.6436 & 0.8749 & 0.0565 \\
CAL & 0.10 & appendix & simes & 0.9233 & 0.9138 & 0.9329 & 2.0503 & 0.1312 \\
DIA & 0.05 & core & single & 0.9500 & 0.9422 & 0.9578 & 2.9115 & 0.1378 \\
DIA & 0.05 & core & k=5 & 0.9790 & 0.9739 & 0.9841 & 3.3162 & 0.1073 \\
DIA & 0.05 & core & k=10 & 0.9750 & 0.9694 & 0.9806 & 3.2568 & 0.0613 \\
DIA & 0.05 & core & k=15 & 0.9790 & 0.9739 & 0.9841 & 3.3132 & 0.0963 \\
DIA & 0.05 & core & MQ & 0.9910 & 0.9876 & 0.9944 & 3.8367 & 0.0789 \\
DIA & 0.05 & appendix & fisher & 0.7607 & 0.7454 & 0.7759 & 1.7110 & 0.0301 \\
DIA & 0.05 & appendix & stouffer & 0.6677 & 0.6508 & 0.6845 & 1.4226 & 0.0239 \\
DIA & 0.05 & appendix & simes & 0.9700 & 0.9639 & 0.9761 & 3.1675 & 0.0811 \\
DIA & 0.10 & core & single & 0.9010 & 0.8903 & 0.9117 & 2.4521 & 0.0963 \\
DIA & 0.10 & core & k=5 & 0.9603 & 0.9533 & 0.9673 & 2.9720 & 0.1065 \\
DIA & 0.10 & core & k=10 & 0.9560 & 0.9487 & 0.9633 & 2.8640 & 0.0509 \\
DIA & 0.10 & core & k=15 & 0.9590 & 0.9519 & 0.9661 & 2.8905 & 0.0928 \\
DIA & 0.10 & core & MQ & 0.9830 & 0.9784 & 0.9876 & 3.4573 & 0.0657 \\
DIA & 0.10 & appendix & fisher & 0.7343 & 0.7185 & 0.7501 & 1.6218 & 0.0308 \\
DIA & 0.10 & appendix & stouffer & 0.6423 & 0.6252 & 0.6595 & 1.3417 & 0.0244 \\
DIA & 0.10 & appendix & simes & 0.9493 & 0.9415 & 0.9572 & 2.7582 & 0.0894 \\

\end{longtable}
\endgroup

\subsection{Real-data conformal runtime benchmark}\label{app:realdata-runtime}
\begingroup
\scriptsize
\setlength{\tabcolsep}{4pt}
\begin{longtable}{cccc}
\caption{Runtime benchmark for real-data repeated-split conformal experiments at $\alpha=0.05$ (1000 replications per dataset). Values are milliseconds per replication.}\label{tab:realdata-runtime}\\
\toprule
Method & DIA & CAL & Mean \\
\midrule
\endfirsthead
\toprule
Method & DIA & CAL & Mean \\
\midrule
\endhead
\bottomrule
\endfoot
single & 0.006 & 0.007 & 0.006 \\
$k=5$ & 0.034 & 0.035 & 0.034 \\
$k=10$ & 0.029 & 0.029 & 0.029 \\
$k=15$ & 0.030 & 0.030 & 0.030 \\
MQ & 0.125 & 0.135 & 0.130 \\
Fisher & 0.359 & 0.366 & 0.362 \\
Stouffer & 1.516 & 1.553 & 1.535 \\
Simes & 0.144 & 0.144 & 0.144 \\
\end{longtable}
\endgroup

\subsection{Conformal oracle summary}
\begingroup
\scriptsize
\setlength{\tabcolsep}{2pt}
\begin{longtable}{@{}ccccccccc@{}}
\caption{Oracle-$k$ benchmark among fixed-$k$ rules and relative MQ efficiency for repeated-split conformal ensembling.}\label{tab:conformal-oracle}\\
\toprule
Scenario & $n$ & $\alpha$ & Oracle $k^\star$ & Oracle cov. & Oracle len. & MQ cov. & MQ len. & MQ/oracle \\
\midrule
\endfirsthead
\toprule
Scenario & $n$ & $\alpha$ & Oracle $k^\star$ & Oracle cov. & Oracle len. & MQ cov. & MQ len. & MQ/oracle \\
\midrule
\endhead
\bottomrule
\endfoot
S1 & 600 & 0.05 & 17 & 0.9701 & 2.5904 & 0.9907 & 3.1756 & 1.2259 \\
S1 & 600 & 0.10 & 18 & 0.9343 & 2.2089 & 0.9801 & 2.7980 & 1.2667 \\
S2 & 600 & 0.05 & 17 & 0.9692 & 3.6352 & 0.9915 & 4.6833 & 1.2883 \\
S2 & 600 & 0.10 & 18 & 0.9365 & 3.0178 & 0.9792 & 3.9907 & 1.3224 \\
S1 & 900 & 0.05 & 19 & 0.9633 & 2.5544 & 0.9905 & 3.1069 & 1.2163 \\
S1 & 900 & 0.10 & 19 & 0.9271 & 2.1679 & 0.9761 & 2.7490 & 1.2681 \\
S2 & 900 & 0.05 & 19 & 0.9657 & 3.5713 & 0.9885 & 4.5522 & 1.2747 \\
S2 & 900 & 0.10 & 19 & 0.9295 & 2.9549 & 0.9760 & 3.9040 & 1.3212 \\
\end{longtable}
\endgroup

\subsection{Multidimensional runtime stress test summary}\label{app:multidim-runtime-full}
\begingroup
\scriptsize
\setlength{\tabcolsep}{4pt}
\begin{longtable}{ccccccc}
\caption{Detailed multidimensional runtime summaries by dimension and method. Mean/median/SD are milliseconds per replication over 20 benchmark replications.}\label{tab:multidim-runtime-full}\\
\toprule
$d$ & $G$ & $G^d$ & Method & Mean ms & Median ms & SD ms \\
\midrule
\endfirsthead
\toprule
$d$ & $G$ & $G^d$ & Method & Mean ms & Median ms & SD ms \\
\midrule
\endhead
\bottomrule
\endfoot
1 & 15 & 15 & k=1-vote & 0.088 & 0.081 & 0.026 \\
1 & 15 & 15 & fisher-grid & 0.265 & 0.252 & 0.045 \\
1 & 15 & 15 & stouffer-grid & 0.384 & 0.379 & 0.020 \\
1 & 15 & 15 & simes-grid & 0.156 & 0.151 & 0.014 \\
2 & 15 & 225 & k=1-vote & 0.090 & 0.087 & 0.015 \\
2 & 15 & 225 & fisher-grid & 0.683 & 0.670 & 0.042 \\
2 & 15 & 225 & stouffer-grid & 1.099 & 1.082 & 0.047 \\
2 & 15 & 225 & simes-grid & 0.540 & 0.538 & 0.006 \\
3 & 15 & 3375 & k=1-vote & 0.124 & 0.120 & 0.018 \\
3 & 15 & 3375 & fisher-grid & 8.634 & 8.647 & 0.245 \\
3 & 15 & 3375 & stouffer-grid & 14.487 & 14.473 & 0.316 \\
3 & 15 & 3375 & simes-grid & 8.442 & 8.520 & 0.300 \\
4 & 15 & 50625 & k=1-vote & 0.184 & 0.182 & 0.022 \\
4 & 15 & 50625 & fisher-grid & 155.135 & 153.458 & 6.924 \\
4 & 15 & 50625 & stouffer-grid & 250.463 & 248.520 & 5.501 \\
4 & 15 & 50625 & simes-grid & 152.565 & 152.335 & 1.488 \\
5 & 15 & 759375 & k=1-vote & 0.199 & 0.181 & 0.043 \\
5 & 15 & 759375 & fisher-grid & 2493.705 & 2494.783 & 20.289 \\
5 & 15 & 759375 & stouffer-grid & 3906.422 & 3901.805 & 28.900 \\
5 & 15 & 759375 & simes-grid & 2489.577 & 2486.195 & 19.071 \\

\end{longtable}
\endgroup

\newpage

\bibliographystyle{agsm}
\bibliography{refs}

\end{document}